\documentclass[11pt]{article}
\usepackage[a4paper,margin=1in]{geometry}
\usepackage{amsmath,amssymb,amsthm,mathtools}
\usepackage{bm}
\usepackage{graphicx}
\usepackage{booktabs}
\usepackage{enumitem}
\usepackage[title]{appendix} 
\usepackage{algorithmicx}
\usepackage{algorithm}
\usepackage{algpseudocode}
\usepackage{float}

\usepackage{tikz}
\usetikzlibrary{arrows.meta, positioning}

\usepackage{authblk}

\usepackage{biblatex}
\usepackage[hidelinks]{hyperref}

\title{Stochastic compliance–evasion dynamics in tax models:\\ a piecewise deterministic Markov process approach}

\author[1]{Jonas Mayr\thanks{jonasmayr02@gmail.com}}
\author[1]{Amira Meddah\thanks{Corresponding author: amira.meddah@jku.at}}
\author[2]{Irene Tubikanec\thanks{irene.tubikanec@jku.at}}

\affil[1]{Institute of Stochastics, Johannes Kepler University Linz, Altenberger Straße 69, 4040 Linz, Austria}
\affil[2]{Institute of Applied Statistics, Johannes Kepler University Linz, Altenberger Straße 69, 4040 Linz, Austria}

\date{} 

\theoremstyle{plain}

\newtheorem{proposition}{Proposition}

\theoremstyle{definition}
\newtheorem{definition}{Definition}
\newtheorem{remark}{Remark}
\newtheorem{property}{Property}

\newcommand{\dd}{\,\mathrm{d}}

\begin{document}
\maketitle

\begin{abstract}
This paper introduces a novel stochastic framework for modelling tax evasion dynamics by extending the deterministic model of Bertotti and Modanese (2018) through the use of Piecewise Deterministic Markov Processes (PDMPs). A key limitation of the original model is the static treatment of taxpayer compliance and evasion behaviour. We address this limitation by incorporating two stochastic mechanisms: \textit{(i)} \emph{audits}, where random enforcement events shift non-compliant individuals toward compliance, and \textit{(ii)} \emph{imitation}, where social influence drives compliant individuals toward evasion. We develop each mechanism as a separate PDMP, proving that both preserve the fundamental conservation laws of population and global income. Numerical simulations show that these mechanisms produce opposing long-term outcomes: pure audits lead to full compliance, while pure imitation leads to full evasion. The central contribution is a combined PDMP model in which both dynamics interact. This model no longer converges to an extreme equilibrium state. Instead, it can exhibit persistent fluctuations around the deterministic trajectory and suggests convergence to a stationary distribution, providing a more realistic representation of compliance-evasion dynamics observed in real economies. The proposed framework offers a versatile approach for integrating behavioural stochasticity into socio-economic models.
\end{abstract}

\vspace{0.2cm}

\noindent\textbf{Keywords}\\ Piecewise deterministic Markov processes, Complex systems, Socio-economic models, Tax models, Tax evasion

\section{Introduction}

Tax evasion represents a persistent challenge for many economies and has significant macroeconomic and social consequences. By reducing tax revenues, evasion limits the capacity of public institutions to provide essential services such as education, health care, and infrastructure \mbox{\cite{chetty2009taxable, feldstein1995effect}}. Beyond its direct fiscal costs, tax evasion contributes significantly to increasing income inequality and impairs the redistributive capacity of the state among other consequences  \mbox{\cite{andreoni1998tax,biondo2022taxation, kirchler2007economic, slemrod2007cheating,slemrod2002tax, torgler2002speaking}}. The mechanisms underpinning evasion behaviour are multifaceted and inherently interactive. They result from a complex interplay of individual decisions, institutional frameworks, and policy measures. Several studies have consistently shown that the probability of tax evasion increases when individuals perceive non-compliance to be widespread, particularly when the perceived risk of detection is low \cite{dubin2007criminal, mcgee2011ethics, torgler2007tax}. Equally important are subjective factors, such as perceptions of fairness, levels of trust in government, and deeply held social and personal norms \cite{kirchler2007economic, shu2012retracted}.

A wide range of research approaches these phenomena through mathematical modelling, which can be broadly divided into two main directions. The first direction builds on the foundational work of Allingham and Sandmo \cite{allingham1972income}, who define tax evasion as a decision made under risk. In this model, a fully informed, rational agent optimises their expected utility based on audit probabilities and penalty rates. This paradigm has been later extensively refined and incorporates factors such as tax rates, income distribution, and alternative audit strategies~\mbox{\cite{andreoni1998tax,reinganum1985income,yitzhaki1974note}}.

A more recent, complementary approach uses agent-based and kinetic models to represent the complex, systemic nature of tax evasion. These models shift the focus from a single optimising agent to a population of interacting individuals who adjust their behaviour based on socioeconomic incentives and peer influence. This perspective is particularly well-suited to investigating how decentralised decisions lead to aggregated outcomes, such as the emergence of social norms and the dynamics of tax compliance in a heterogeneous population. Seminal contributions in agent-based economics paved the way for tax-specific applications \cite{aoki2004modeling, arthur2018economy}. Moreover, Bertotti and Modanese introduced in \cite{ bertotti2018mathematical,bertotti2014micro} a deterministic model that describes a closed society with income classes and tax evasion sectors and uses a nonlinear ordinary differential equation (ODE) to analyse long-term income distribution and tax compliance. Similar agent-based models \cite{bloomquist2012agent, hokamp2018agent, pickhardt2014behavioral} have been used to simulate the effects of policy measures and behavioural attributes in complex, evolving economies.

Building on the deterministic model introduced in \cite{bertotti2018mathematical}, this paper addresses a significant structural limitation: the static nature of individual evasion behaviour. In the original formulation, agents are assumed to remain within their compliance sectors, an assumption that, while analytically tractable, ignores the complex behavioural ecology of tax evasion. In reality, compliance is a dynamic decision, continuously shaped by institutional incentives like audit risks and penalties, as well as social factors such as peer effects and evolving norms.

To incorporate these dynamics, we extend the original model by formalising stochastic transitions between evasion sectors using Piecewise Deterministic Markov Processes (PDMPs). This class of stochastic processes was introduced by Davis in 1984 \cite{davis1984piecewise} and belongs to the general class of stochastic hybrid systems, which combine continuous deterministic or stochastic evolution with discrete stochastic jump dynamics \cite{blom1988piecewise, bujorianu2005toward, desmettre2025approximate, meddah2024stochastic}. The PDMP framework has been widely used to model complex systems across various fields, including biology, neuroscience, and finance \cite{buckwar2023stochastic, buckwar2011exact, buckwar2025american, desmettre2025hybrid}, to name a few.

Here, we illustrate that PDMPs also offer a powerful framework for modelling tax systems whose states evolve deterministically (via an ODE) until a random event, such as an audit or a change in social perception, triggers an instantaneous shift in tax compliance behaviour. This extension not only enhances the realism of the model but also enables the investigation of how policy parameters and social interactions co-evolve to create a macroscopic distribution of compliance and income.

To the best of our knowledge, this constitutes the first application of PDMPs to model behavioural dynamics and audit effects in tax evasion.  In this work, we introduce three PDMP based models that incorporate these effects:
\begin{enumerate}
    \item an \emph{audit PDMP}, in which random audit events induce transitions from
          non-compliance to compliance;
    \item an \emph{imitation PDMP}, in which individuals adopt the behaviour of economic
          peers, leading to transitions toward less compliant sectors;
    \item a \emph{combined PDMP}, in which audits and imitation act simultaneously,
          generating a dynamic balance between opposing behavioural forces.
\end{enumerate}
For each PDMP model, we specify its three characteristic components, namely the \textit{deterministic flow}, the \textit{jump rate}, and the \textit{transition kernel}, and
verify that fundamental structural properties of the original deterministic system \cite{bertotti2018mathematical}, such as conservation of total population and total income, are preserved. 
We complement the analytical results with a detailed simulation study illustrating the effects of behavioural mechanisms on population dynamics. Numerical experiments show that the audit and imitation mechanisms have opposite long-term tendencies, while their combination yields stationary patterns consistent with empirical observations of persistent partial compliance.

This paper is organised as follows. In Section~\ref{Back_sec} we recall the deterministic model introduced in \cite{bertotti2018mathematical}. Section~\ref{PDMP_intro} then recalls the theoretical framework of PDMPs, laying the mathematical groundwork for our model extensions. Building on this, Section~\ref{PDMP_form} develops our first key contribution: the formulation of PDMP models for audit and imitation dynamics, including an illustrative case study for each. Section~\ref{Combined_mod} presents our comprehensive combined model, detailing its structural properties with insights from numerical simulations. Conclusions and perspectives for
future work are summarised in Section~\ref{conclusion}. Finally, in Appendix \ref{PDMP_simulation}, we detail how to simulate the proposed PDMP models.

\section{Background and deterministic model}
\label{Back_sec}

In this section, we recall the original deterministic model of Bertotti and Modanese \cite{bertotti2018mathematical}. 

\paragraph{Setting.}

Consider a population of individuals divided into a finite number of income classes~$n$, with average incomes $0<r_1<\dots<r_n$ for each class, and a finite number of evasion sectors~$m$, representing different types of tax compliance or evasion behaviours. The population state is then described by
\begin{equation*}
    x=(x_j^\alpha)_{j=1,\dots,n;\,\alpha=1,\dots,m},
\end{equation*}
where 
$x_j^\alpha$ denotes the fraction of individuals in income class $j$ and evasion  sector $\alpha$, i.e., in group~$(j,\alpha)$.

In this framework, the society is assumed to be closed, in the sense that no individuals enter or leave the population and no external economic interactions occur. Each evasion behaviour of a taxpayer is assumed to remain constant over time, while individuals may change income class as a result of economic interactions. Taxation and redistribution are incorporated in the model through pairwise monetary exchanges, i.e., when two individuals interact, the receiving party is taxed according to the rate of their income class, and the collected amount is redistributed across the population. In the presence of evasion, only a fraction of the due taxes is actually paid, while the remaining part is concealed, reducing the resources available for redistribution.

These class-dependent taxation rules and sector-specific evasion behaviours determine the evolution of the fractions $x_j^\alpha$ over time. More precisely, when an $(h,\beta)$-agent meets a $(l,\vartheta)$-agent, for \mbox{$h,l \in \{1,\ldots,n\}$} and $\beta,\vartheta\in \{1,\ldots,m\}$, an amount of money $S>0$ is transferred from the agent in class $h$ to the agent in class $l$ with probability $p_{h,l}$ (payment probability), and taxation with redistribution is applied. The payment probability $p_{h,l}$ specifies the likelihood that, in an encounter
between an $h$-class and a $l$-class individual, the $h$-individual transfers the
amount $S$ to the $l$-individual. It satisfies
\begin{equation*}
0 \leq p_{h,l} \leq 1, 
\qquad p_{h,l} + p_{l,h} \leq 1 ,    
\end{equation*}
so that in each encounter either one individual pays, or no transfer occurs.
A common choice, as in \cite{bertotti2018mathematical}, is
\begin{equation}
 p_{h,l} = \frac{\min\{r_h,r_l\}}{4r_n}, 
 \label{paym_prob}
\end{equation}
together with the special cases summarised in Table \ref{tab:paymentProbs}.
\begin{table}
\centering
\begin{tabular}{ll}
\toprule
\textbf{Payment probability} & \textbf{Interpretation} \\
\midrule
$p_{j,j} = \dfrac{r_j}{2r_n}$ & self-class interactions \\[10pt]
$p_{h,1} = \dfrac{r_1}{2r_n}$ & payments to the poorest class \\[10pt]
$p_{n,l} = \dfrac{r_l}{2r_n}$ & payments from the richest class \\[10pt]
$p_{1,l} = 0$ & poorest individuals never pay \\[10pt]
$p_{h,n} = 0$ & richest individuals never receive \\
\bottomrule
\end{tabular}
\caption{Special cases of payment probabilities.}
\label{tab:paymentProbs}
\end{table}
\medskip
This formulation ensures heterogeneity in the frequency of payments across income classes and prevents unrealistic transfers at the income boundaries.

Each income class $j \in \{ 1,\ldots,n \}$ is associated with a statutory tax rate $\tau_j$, which represents the fraction of income that taxpayers in that class are officially required to pay. However, in the presence of tax evasion, only a part of this amount
is effectively paid. To model this, we introduce for each evasion sector $\alpha \in \{1,\dots,m\}$ a compliance parameter
\begin{equation*}
 \theta^{\mathrm{ev}}(\alpha) \in [0,1],   
\end{equation*}
denoting the fraction of due taxes actually paid by individuals in sector $\alpha$.
Accordingly, the effective tax rate for a $(j,\alpha)$-individual is given by
\begin{equation}
    \theta_{j,\alpha} = \theta^{\mathrm{ev}}(\alpha)\,\tau_j .
    \label{frac-pay}
\end{equation}
Thus, for honest taxpayers $\theta^{\mathrm{ev}}(\alpha)=1$, while values $\theta^{\mathrm{ev}}(\alpha)<1$ describe partial evasion.

The effective rates $\theta_{j,\alpha}$ enter the model through the taxation and redistribution mechanism: when an $(h,\beta)$-agent transfers an amount $S$ to a $(l,\vartheta)$-agent, the latter retains a fraction \mbox{$(1-\theta_{l,\vartheta})S$}, while the remainder $\theta_{l,\vartheta} S$ is collected as taxes and redistributed among the population (excluding the richest class). In this way, the heterogeneity across income classes and evasion sectors jointly determines the redistribution flows.

\paragraph{ODE model.}

Combining this mechanism with the payment probabilities introduced above yields an $n\times m$-dimensional ODE, where each fraction $x_j^\alpha$ evolves according to
\begin{equation}\label{population_eq}
\frac{\dd x_j^\alpha(t)}{\dd t}
= \underbrace{\sum_{h,l=1}^n \sum_{\beta,\vartheta=1}^m
\Big(
   C^{(j,\alpha)}_{(h,\beta);(l,\vartheta)}
   + T^{(j,\alpha)}_{(h,\beta);(l,\vartheta)}(x(t))
\Big) \, x_h^\beta(t) x_l^\vartheta(t)}_{\text{inflow: interaction gains}}
- \underbrace{x_j^\alpha(t) \sum_{l=1}^n \sum_{\vartheta=1}^m x_l^\vartheta(t).}_{\text{outflow: interaction losses}}
\end{equation}
The coefficient $C^{(j,\alpha)}_{(h,\beta);(l,\vartheta)}$ encodes the transitions associated with direct exchanges, while the function  $T^{(j,\alpha)}_{(h,\beta);(l,\vartheta)}(x(t))$ captures the redistribution effects. They are specified as follows. 

The coefficient $C^{(j,\alpha)}_{(h,\beta);(l,\vartheta)}$ represents the probability that an individual initially in group $(h,\beta)$, after interacting with an individual in $(l,\vartheta)$, transitions to group $(j,\alpha)$. For any fixed~$(h,\beta)$ and $(l,\vartheta)$, these coefficients satisfy the normalisation condition
\begin{equation*}
    \sum_{j=1}^{n}\sum_{\alpha=1}^{m} C^{(j,\alpha)}_{(h,\beta);(l,\vartheta)} =1.
\end{equation*}
In particular, for an individual in a given evasion sector, the non-zero coefficients are specified as~follows.
\begin{itemize}
\item[\emph{i}.] \textit{Move to a lower class.}
An individual moves from class $h = j+1$ down to class $j$ (within the same sector $\alpha$) when a payment they make reduces their income below the threshold $r_j$. The corresponding transition coefficient is given by
\begin{equation*}
C^{(j,\alpha)}_{(j+1,\alpha);(l,\beta)}
= p_{j+1,l}\frac{S(1-\theta_{l,\beta})}{r_{j+1}-r_j},
\end{equation*}
which is defined only for $j \leq n-1$ and $l \leq n-1$. Here $p_{j+1,\,l}$ is the probability that an individual in class $j+1$ pays an amount $S$ to a partner in class $l$, and $(1-\theta_{l,\beta})S$ is the net amount effectively lost by the payer, which depends on the recipient's evasion level. The denominator $r_{j+1}-r_j$ rescales this expected loss by the income gap between the classes.

\item[\emph{ii}.] \textit{Move to a higher class.}
An individual moves from class $h = j-1$ up to class $j$ when an income gain from a received payment is sufficient to cross the threshold $r_j$. The corresponding coefficient is
\begin{equation*}
C^{(j,\alpha)}_{(j-1,\alpha);(l,\beta)}
= p_{l,j-1}\frac{S(1-\theta_{j-1,\alpha})}{r_j-r_{j-1}},
\end{equation*}
defined only for $j \geq 2$ and $l \geq 2$. In this case, the $(j-1,\alpha)$-individual is the receiver, with probability $p_{l,\,j-1}$ that a partner in class $l$ pays them. The net income increment they retain, $(1-\theta_{j-1,\alpha})S$, depends on their own evasion behaviour.

\item[\emph{iii}.] \textit{Remain in the same class.}
If the income variation from an interaction does not cause a class change, an individual remains in their initial class $h = j$. The probability of this event happening is
\begin{equation*}
C^{(j,\alpha)}_{(j,\alpha);(l,\beta)}
= 1
- p_{l,j}\frac{S(1-\theta_{j,\alpha})}{r_{j+1}-r_j}
- p_{j,l}\frac{S(1-\theta_{l,\beta})}{r_j-r_{j-1}}.
\end{equation*}
In this expression:
\begin{itemize}
    \item[$\bullet$] The second summand (upward move term) is present only for $j\leq n-1$ and $l\geq 2$.
    \item[$\bullet$] The third summand (downward move term) is present only for $j\geq 2$ and $l\leq n-1$.
\end{itemize}
This formulation ensures that the net flow of individuals conserves the total population. The value $C^{(j,\alpha)}_{(j,\alpha);(l,\beta)}$ represents the propensity for an individual to remain in class $j$ after an interaction.
\end{itemize}

The function $T^{(j,\alpha)}_{(h,\beta);(l,\vartheta)} \colon \mathbb{R}^{n\times m} \to \mathbb{R}$ describes the variation in group $(j,\alpha)$ due to taxation and redistribution associated with a payment from an individual in $(h,\beta)$ to an individual in $(l,\vartheta)$. For any fixed $(h,\beta)$, $(l,\vartheta)$ and $x \in \mathbb{R}^{n\times m}$, these functions satisfy
\begin{equation*}
    \sum_{j=1}^{n} \sum_{\alpha=1}^{m} T^{(j,\alpha)}_{(h,\beta);(l,\vartheta)}(x) =0.
\end{equation*}
In particular, they are given by
\begin{equation}\label{eq:T-redistribution}
\begin{aligned}
T^{(j,\alpha)}_{(h,\beta);(l,\vartheta)}(x)
&=
\underbrace{%
\frac{p_{h,l}\,S\,\theta_{l,\vartheta}}{\displaystyle \sum_{i=1}^{n}\sum_{\nu=1}^{m} x_i^\nu}
\left(
\frac{x_{j-1}^{\alpha}}{\,r_j-r_{j-1}\,}
-\frac{x_{j}^{\alpha}}{\,r_{j+1}-r_j\,}
\right)
}_{\text{uniform redistribution}}
\\[1ex]
&\quad+
\underbrace{%
p_{h,l}\,S\,\theta_{l,\vartheta}
\left(
\frac{\delta_{h,j+1}\,\delta_{\alpha,\beta}}{\,r_h-r_j\,}
-\frac{\delta_{h,j}\,\delta_{\alpha,\beta}}{\,r_h-r_{j-1}\,}
\right)
\frac{\displaystyle \sum_{i=1}^{n-1}\sum_{\nu=1}^{m} x_i^{\nu}}
     {\displaystyle \sum_{i=1}^{n}\sum_{\nu=1}^{m} x_i^\nu}
}_{\text{boundary correction}}
\;,
\end{aligned}
\end{equation}
where $\delta_{\cdot,\cdot}$ denotes the Kronecker delta. The two terms appearing in \eqref{eq:T-redistribution} are interpreted as~follows.

\begin{itemize}
  \item[a.] The uniform redistribution term in \eqref{eq:T-redistribution} models the society-wide distribution of the tax revenue $p_{h,l}S\theta_{l,\vartheta}$. This revenue is normalised by the total population. The expression in parentheses calculates the net flow into class $j$: a fraction of the fund is gained from the class below ($j-1$) and loses a fraction to the class above ($j+1$). The explicit normalisation, while redundant under the condition $\sum_{i,\nu}x_i^\nu=1$, emphasises the per-capita nature of the redistribution.

  \item[b.] The boundary correction term in \eqref{eq:T-redistribution} ensures consistency for the payer when their payment causes a change in their own income class. The Kronecker deltas restrict this adjustment to the specific group the payer belongs to.
    \begin{itemize}
      \item[$\bullet$] The term $ +\dfrac{\delta_{h,j+1}\,\delta_{\alpha,\beta}}{\,r_h-r_j\,} $ applies if the payer moved from class $j+1$ down to $j$.
      \item[$\bullet$] The term $ -\dfrac{\delta_{h,j}\,\delta_{\alpha,\beta}}{\,r_h-r_{j-1}\,} $ applies if the payer moved from class $j$ down to $j-1$.
    \end{itemize}
  Further, the factor $\dfrac{\sum_{i=1}^{n-1} \sum_{\nu=1}^{m} x_i^\nu}{\sum_{i=1}^{n} \sum_{\nu=1}^{m} x_i^\nu}$ ensures this adjustment is only applied among the non-richest classes.
\end{itemize}

\paragraph{Existence of a unique solution and its properties.}

The $n\times m$-dimensional ODE defined by \eqref{population_eq} satisfies the following existence and uniqueness result, recalled from  \cite{bertotti2018mathematical}. 

\begin{property}[Existence of a unique solution]
 \label{prop0}
For any initial condition 
${x}_0 = (x^{\alpha}_{0j})_{j=1,\ldots,n;\alpha=1,\ldots,m}$ with 
\begin{equation*}
    x^{\alpha}_{0j} \geq 0, \quad \text{for all } j=1,\ldots,n, \ \alpha=1,\ldots,m,
\end{equation*}
and 
\begin{equation*}
    \sum_{j=1}^{n}\sum_{\alpha=1}^{m}x^{\alpha}_{0j}=1,
\end{equation*}
there exists a unique solution ${x}(t)=(x_j^\alpha(t))_{j=1,\ldots,n;\alpha=1,\ldots,m}$, defined for all $t \geq 0$ and satisfying~$x(0)=x_0$.
\end{property}

Moreover, this solution admits the following structural properties, also recalled from \cite{bertotti2018mathematical}. Properties \ref{prop1} and \ref{prop2}  are proved analytically (using arguments similar to those in \cite{BERTOTTI2010}), while Property \ref{prop3} is supported by numerical evidence.

\begin{property}[Population conservation]
\label{prop1}
Non-negativity is preserved, i.e.
\begin{equation*}
    x_j^\alpha(t) \geq 0, \quad \text{for all } j=1,\ldots,n, \ \alpha=1,\ldots,m, \text{ and } t \geq 0,
\end{equation*}
and the total population is conserved, i.e.
\begin{equation*}
\sum_{j=1}^{n}\sum_{\alpha=1}^{m} x_j^\alpha(t) = 1, \quad \text{for all } t \geq 0. 
\end{equation*}
\end{property}

\begin{property}[Global income conversation]
\label{prop2}
The global income remains constant over time, i.e.
\begin{equation*}
    \sum_{j=1}^{n} r_j \sum_{\alpha=1}^{m} x_j^\alpha(t)=\mu(x(t)) = \mu(x_0)=:\mu, \quad \text{for all } t \geq 0.
\end{equation*}
\end{property}

\begin{property}[Stationarity]
\label{prop3}
Solutions $x(t)$ evolving from initial states $x_0$ that, in addition to the conditions in Property \ref{prop0}, share the same initial ratio between evasion sectors across all income classes
and the same average initial income
\begin{equation*}
   \mu:=\mu(x_0)=\sum_{j=1}^{n} r_j \sum_{\alpha=1}^{m} x_{0j}^\alpha,
\end{equation*}
converge to the same stationary state $x^*$, i.e.
\begin{equation*}
    \lim\limits_{t \to +\infty} x(t)=x^*. 
\end{equation*}
\end{property}

\paragraph{An illustrative example and model simulation.}

Throughout this paper, we consider an illustrative tax system with $n=3$ income classes and $m=2$ evasion sectors.
The average incomes for the three income classes are set to
\begin{equation*}
r_1 = 10, \qquad r_2 = 20, \qquad r_3 = 30,    
\end{equation*}
\noindent creating uniform income gaps of $r_{j+1} - r_j = 10$. We adopt a progressive taxation scheme with statutory tax rates:
\begin{equation*}
   \tau_1 = 0.1, \qquad \tau_2 = 0.2, \qquad \tau_3 = 0.3. 
\end{equation*}
Moreover, we consider two evasion sectors representing distinct compliance behaviours:
\begin{itemize}
    \item Sector $\alpha=1$: \emph{fully compliant} taxpayers, with $\theta_{\mathrm{ev}}(1) = 1$,
    \item Sector $\alpha=2$: \emph{partial evaders}, with $\theta_{\mathrm{ev}}(2) = 0.5$.
\end{itemize}
Using~\eqref{frac-pay}, the resulting effective tax rates $\theta=(\theta_{j,\alpha})_{j=1,\ldots,n;\alpha=1,\ldots,m}$ are given by
\begin{equation*}
    \theta = 
\begin{pmatrix}
0.10 & 0.05\\
0.20 & 0.10\\
0.30 & 0.15
\end{pmatrix}.
\end{equation*}
Here, the transaction amount is set to $S = 1$, which satisfies the standard assumption ${S \ll r_{j+1} - r_j}$. The payment probabilities $p_{h,l}$, determined by~\eqref{paym_prob}, form the matrix
\begin{equation*}
 P =
\begin{pmatrix}
0 & 0 & 0\\[0.2ex]
{1}/{6} & {1}/{3} & 0\\[0.2ex]
{1}/{6} & {1}/{3} & 0
\end{pmatrix}.   
\end{equation*}
\noindent This structure reflects key model assumptions: individuals in the lowest class ($j=1$) never act as payers, while those in the highest class ($j=3$) never act as recipients.

Figure \ref{fig:ODE_example_setting} illustrates the corresponding ODE model \eqref{population_eq}. In particular, it shows the evolution of the population fractions $x_j^\alpha$, for $j=1,2,3$ and $\alpha=1,2$, obtained as solutions of the ODE for three different initial conditions $x_0$ satisfying the assumptions of Properties \ref{prop0} and \ref{prop3}. The initial conditions are
\begin{equation}
\label{initial_conditions}
    x_0^{(1)}=\begin{pmatrix}
        1/6 & 1/6 \\
        1/6 & 1/6 \\
        1/6 & 1/6
    \end{pmatrix}, \quad
    x_0^{(2)}=\begin{pmatrix}
        0 & 0 \\
        1/2 & 1/2 \\
        0 & 0
    \end{pmatrix}, \quad 
    x_0^{(3)}=\begin{pmatrix}
        {1}/{8} & {1}/{8} \\
        {1}/{4} & {1}/{4} \\
        {1}/{8} & {1}/{8}
    \end{pmatrix}, 
\end{equation}
corresponding to the blue, green, and orange curves, respectively. For each solution, Properties~\ref{prop1} and \ref{prop2} can be verified at each time $t$, and Property \ref{prop3} is visually apparent from the figure. Note that these simulations are based on a modified Euler method (see Appendix \ref{Im_Eul} for details).


\begin{figure}
    \centering
    \includegraphics[width=1.0\textwidth]{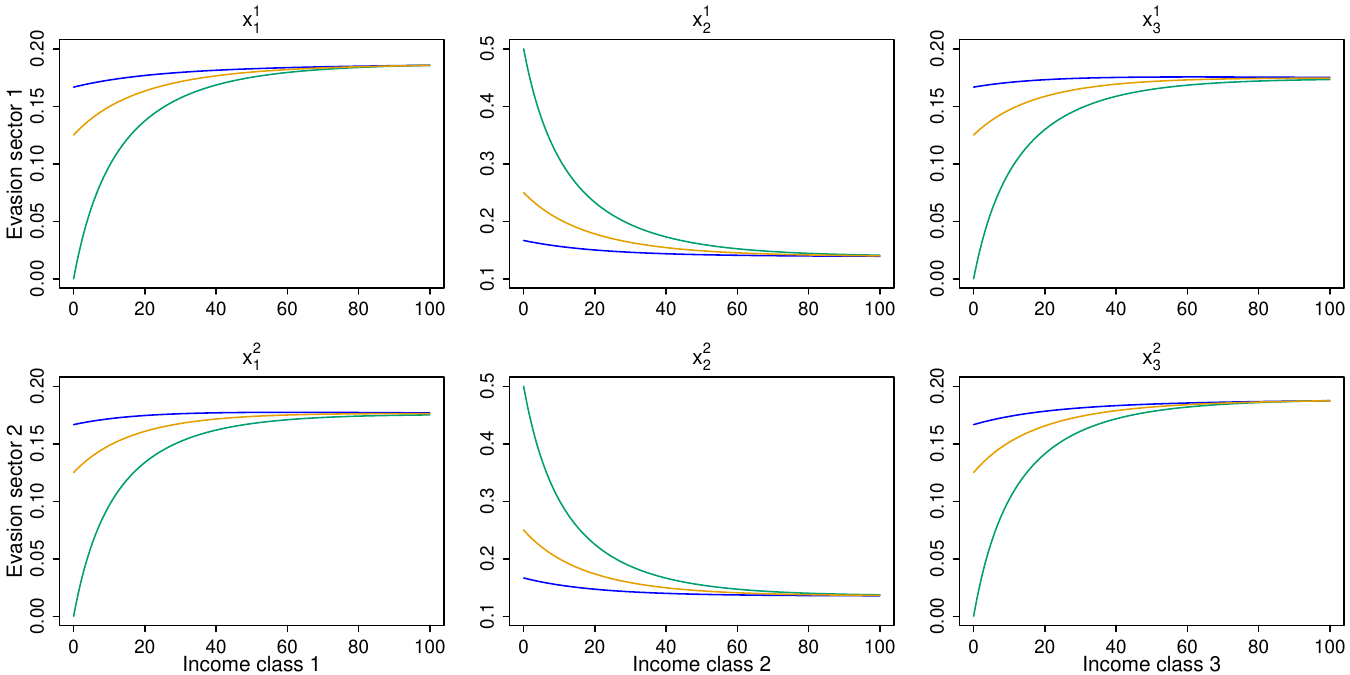}
    \caption{\textbf{Evolution of population fractions} for the example setting, using three different initial conditions $x_0$ (see \eqref{initial_conditions}) that satisfy the assumptions of Properties \ref{prop0} and \ref{prop3}.}
    \label{fig:ODE_example_setting}
\end{figure}

\section{Piecewise deterministic Markov processes}
\label{PDMP_intro}

In this section, we recall the general definition of PDMPs in the sense of Davis~\cite{davis1984piecewise}, which will be applied for tax modelling in the subsequent sections. A PDMP is a continuous-time Markov process whose randomness enters through state-dependent jump times, while between jumps the trajectories evolve deterministically according to an ODE. It is characterised by three components: the \emph{deterministic flow} $\phi$, the \emph{jump rate function} $\lambda$, and the \emph{transition kernel} $Q$.

Let $E = \mathcal{X} \times K$ be the state space, endowed with the product Borel $\sigma$-algebra, where $\mathcal{X}$ is a subset of $\mathbb{R}^d$ and $K$ is a finite set or a countable set. An element of $E$ is denoted by $z = (x,k)$, where $x \in \mathcal{X}$ represents the continuous component and $k \in K$ a discrete mode. We consider a stochastic process $Z(t)=( X(t),K(t) )$, for ${t\geq 0}$, taking values in $E = \mathcal{X} \times K$. The corresponding characteristic triplet $(\phi,\lambda,Q)$ defining the PDMP is introduced as follows.

\paragraph{Deterministic flow.}

For each mode $k \in K$, let $v_k : \mathcal{X} \to \mathbb{R}^d$ be a locally Lipschitz vector field. This field generates a unique global flow $\phi_k(t,x)$ determined by the ODE
\begin{equation*}
   \frac{d}{dt}\phi_k(t,x) = v_k(\phi_k(t,x)), 
\qquad \phi_k(0,x) = x. 
\end{equation*}
In other words, while the process remains in mode $k$, the continuous component $X(t)$ evolves deterministically according to the flow $\phi_k$, and the mode component $K(t)$ remains constant between jumps.

\paragraph{Jump mechanism.}

The transitions between states are governed by the jump mechanism, which is determined by two local characteristics:
\begin{enumerate}
    \item The \emph{jump rate function} $\lambda : E \to [0,\infty)$, assumed to be measurable, which determines the state-dependent intensity of jumps. For each $z=(x,k) \in E$, we assume that there exists $t>0$ such that 
    \begin{equation*}
        \int_0^{t} \lambda(\phi_k(s,x), k) \, ds < \infty.
    \end{equation*}
    \item The \emph{transition kernel} $Q(x,\cdot)$, which assigns to each current state $z=(x,k) \in E$ a probability measure on $(E,\mathcal{B}(E))$, specifying the distribution of the post-jump state. The kernel satisfies the following properties:
    \begin{itemize}
        \item For each fixed $A \in \mathcal{B}(E)$, the map $z \mapsto Q(z,A)$ is measurable.
        \item $Q(z,\{z\})=0$ for all $z$, ensuring that every jump changes the state.
    \end{itemize}
\end{enumerate}

\paragraph{PDMP construction.}

The dynamics of a PDMP are constructed by alternating deterministic motion with random jumps. More precisely, starting from $Z(0)=z_0=(x_0, k_0) \in E$, the trajectory is built recursively as follows.

Conditional on $Z(0)=(x_0,k_0)$, the first waiting time $\omega_1$ has a survival function given by
\begin{equation*}
 \mathbb{P}(\omega_1 > t \mid Z(0)) 
= \exp\!\left(-\int_0^t \lambda\big(\phi_{k_0}(s, x_0), k_0\big)\,ds\right), 
\qquad t \ge 0.   
\end{equation*}
On the interval $[0, T_1)$, where $T_1 = \omega_1$, the process evolves deterministically according to the flow generated by the vector field with mode $k_0$, i.e.,
\begin{equation*}
  Z(t) = (X(t),K(t)) = \big( \phi_{k_0}(t, x_0), k_0\big), \qquad t \in [0, T_1).
\end{equation*}

\noindent At the jump time $T_1$, the process transitions from the pre-jump state $Z({T_1^-})$ to a new state drawn from the transition kernel, i.e.,
\begin{equation*}
  Z({T_1}) \sim Q\big(Z({T_1^-}), \cdot\big).
\end{equation*}
The post–jump state always differs from the pre–jump state, meaning that, depending on the kernel $Q$, a jump may alter the continuous component, the discrete mode, or both.

Starting from the new initial state $z_1=Z({T_1})$, this procedure is repeated. At each step, conditional on the state immediately after the $(i-1)$-th jump, a waiting time $\omega_i$ is drawn according to the jump distribution with a jump rate $\lambda$. The process follows the deterministic flow until $T_i = T_{i-1} + \omega_i$, and a new post-jump state $z_i=Z({T_i})$ is sampled from $Q(Z({T_i^-}), \cdot)$. This yields an increasing sequence of jump times
\begin{equation*}
  0 = T_0 < T_1 < T_2 < \cdots,
\end{equation*}
together with a c\`adl\`ag\footnote{A c\`adl\`ag process is a stochastic process where paths are right-continuous with left limits everywhere, with probability one.} process $Z(t)=(X(t),K(t))$, for ${t \geq 0}$, that is piecewise deterministic. Specifically, $K(t)$ is piecewise constant and $X(t)$ evolves according to the respective ODE flows. 

In summary, the PDMP is given by
\begin{equation*}
    Z(t)=(X(t),K(t))=\big( \phi_{k_i}(t-T_i, x_i), k_i\big), \qquad t \in [T_i, T_{i+1}), \quad i\geq 0,
\end{equation*}
where $T_0:=0$, $(x_0, k_0)=z_0=Z(0)$, and
\begin{equation*}
 (x_{i},k_{i})=z_{i}=Z({T_{i}}) \sim Q\big(Z({T_{i}^-}), \cdot\big), \qquad i \geq 1.
\end{equation*}

To ensure this process is well-defined, we further assume that it is non-explosive, i.e., the number of jumps in any finite time interval is almost surely finite. That is, for all $t \ge 0$,
\begin{equation*}
  N_t = \sum_{i \ge 1} \mathbf{1}_{\{T_i \le t\}} < \infty \quad \text{a.s}.
\end{equation*}

\begin{definition}
A stochastic process $Z(t)=(X(t),K(t))$, ${t \geq 0}$, with state space $E = \mathcal{X} \times K$ is called a PDMP if it is constructed by the above procedure from the triplet of local characteristics~$(\phi, \lambda, Q)$.
\end{definition}

\begin{remark}
In the general formulation of Davis~\cite{davis1984piecewise}, the continuous domain $\mathcal{X}$ may depend on the mode $k$, and jumps may also be triggered upon hitting the boundary of $\mathcal{X}$. For our purposes, we restrict attention to the subclass where $\mathcal{X}$ is fixed across all modes and the deterministic flows remain in $\mathcal{X}$.
\end{remark}

\section{PDMP modelling of different sector transitions}
\label{PDMP_form}

In the ODE model of Bertotti and Modanese~\cite{bertotti2018mathematical} (cf. Section \ref{Back_sec}), one of the key
assumptions is that each individual’s evasion behaviour remains fixed over time. As a consequence, the population fraction in each evasion sector is constant, even though individuals may move between income classes. While this simplifies the analysis, it excludes important mechanisms that drive compliance in practice.

To address this limitation, we extend the model by allowing probabilistic transitions between sectors, formulated using PDMPs as introduced in Section \ref{PDMP_intro}. At random points in time, individuals may change their evasion behaviour due to events that affect their compliance. In particular, in this work we consider two mechanisms: audits, which may induce a taxpayer to become more compliant, and behavioural imitation,
through which individuals adopt the evasions strategies of others. The effects of these two mechanisms are first analysed separately in Sections \ref{audit_sec} and \ref{Sec_behave} below (with numerical experiments in Section \ref{ex:3x2}) and then combined into a joint PDMP model in Section \ref{Combined_mod}.

\subsection{Audits}
\label{audit_sec}

Audits represent inspections of individual financial records by an independent authority (e.g. the tax office). Their purpose is to verify the accuracy of reported income and the corresponding tax payments. In practice, audits occur on a recurring but unpredictable basis, which makes them naturally suited to be modelled as random events within the PDMP framework.

When tax evasion is detected, consequences may range from warnings to severe financial penalties or even criminal prosecution, typically accompanied by an invoice for the unpaid taxes. Such sanctions are also likely to increase the probability of further audits in the future. From a modelling perspective, these effects can be captured by assuming that audits induce a shift in compliance behaviour, that is, after being audited, individuals in non-compliant sectors ($\alpha>1$) transition with positive probability into the fully compliant sector ($\alpha=1$). This mechanism introduces stochastic sectoral movement that fundamentally extends the deterministic income class dynamics of the baseline ODE model.

\paragraph{Mathematical formulation of the audit PDMP.}

In the general framework of Section~\ref{PDMP_intro}, a PDMP is represented as a couple of stochastic processes $Z(t) = (X(t),K(t))$, $t\geq 0$, where $X(t)$ denotes the continuous component and $K(t)$ the discrete mode. 
In the present model the mode does not change, so $K(t)$ is constant and thus can be omitted for simplicity of notation. We therefore identify the PDMP with its continuous component and have 
\begin{equation*}
  Z(t) = X(t) = \bigl(X_j^\alpha(t)\bigr)_{j=1,\dots,n;\,\alpha=1,\dots,m}  = \phi(t-T_i, x_i), \qquad t \in [T_i,T_{i+1}), \quad i\geq 0,
\end{equation*}
where $\phi$ is the flow of the ODE defined by \eqref{population_eq}, $T_0:=0$, and the $T_i$, $i\geq 1$, correspond to random audit event times. This means that $X(t)$ follows the $n\times m$-dimensional ODE \eqref{population_eq} piecewise on each inter-event interval $[T_i,T_{i+1})$, with a different initial condition $x_i=X(T_i)$ for each ``piece''. These varying initial conditions reflect the population reallocation movements happening at each random event time $T_i$.  

The random audit events in this PDMP framework occur according to a state-dependent intensity. Specifically, the intensity function is chosen to scale with the share of non-compliant agents as follows
\begin{equation}\label{audit_rate}
  \lambda_{\text{audit}}(x) \;=\; \gamma \sum_{j=1}^n \sum_{\alpha=2}^m x_j^\alpha, 
\end{equation}
where $\gamma>0$ is a policy parameter controlling the overall level of enforcement. Thus, the higher the fraction of evaders in the population, the more frequently audits are expected to occur. 

The constant rate specification $\lambda_{\text{audit}}(x)\equiv \gamma_0>0$ is obtained as a special case of \eqref{audit_rate}, where the dependence on the non-compliant share is suppressed. This regime represents surprise audits, in which enforcement events occur at random times according to a homogeneous Poisson process, independently of the current level of evasion.

At each audit time $T_i$, a fixed proportion $\delta \in (0,1]$ of the non-compliant population is instantaneously reassigned to full compliance within each income class. In this setting, an individual's income class remains unchanged during these changes. Formally, the audit update is defined by the map $\Psi$, given component-wise by
\begin{equation}\label{audit_map}
  \Psi^\alpha_j(x) =
  \begin{cases}
    x^1_j + \delta \displaystyle\sum_{\beta=2}^m x^\beta_j, & \alpha = 1, \\[1.5ex]
    (1-\delta)\,x^\alpha_j, & \alpha > 1,
  \end{cases}
  \qquad j = 1, \dots, n.
\end{equation}
Thus, within each class $j$, the evading shares are reduced by a factor $(1-\delta)$ and the released mass $\delta \sum_{\beta=2}^{m} x_j^\beta$ is transferred to the compliant sector ($\alpha=1$). 
The corresponding transition kernel is deterministic,
\begin{equation*}
   Q_{\text{audit}}(x,\cdot) = \boldsymbol{\delta}_{\Psi(x)}(\cdot), 
\end{equation*}
and satisfies $Q_{\text{audit}}(x,\{x\})=0$ whenever 
at least one class contains a positive fraction of evaders.

\begin{remark}
    Stochastic extensions of the transition kernel are possible, for instance by choosing~$\delta$ randomly in $(0,1]$ at each audit event time.
\end{remark}

\paragraph{Well-posedness of the audit PDMP.}

Define the set
\begin{equation}
  \mathcal{X} := \Bigl\{\, x \in \mathbb{R}^{n \times m} \colon x_j^\alpha \geq 0, \
   \sum_{j=1}^n \sum_{\alpha=1}^m x_j^\alpha = 1 \,\Bigr\}. 
   \label{State_space}
\end{equation}
Then, the following holds for the transition map $\Psi$ as in \eqref{audit_map}.

\begin{proposition}\label{prop:audit}
For every $\delta \in (0,1]$ and $x \in \mathcal{X}$, the audit update $\Psi$ \eqref{audit_map} satisfies $\Psi(x)\in \mathcal{X}$. 
\end{proposition}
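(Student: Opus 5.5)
To show that $\Psi(x)\in\mathcal{X}$ for every $\delta\in(0,1]$ and $x\in\mathcal{X}$, I will check the two defining conditions of $\mathcal{X}$ in \eqref{State_space} directly from the componentwise definition \eqref{audit_map}. I fix $x\in\mathcal{X}$, so $x_j^\alpha\ge 0$ and $\sum_{j,\alpha}x_j^\alpha=1$. I will not need any of the earlier Properties, since $\Psi$ is an explicit linear map.

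\emph{Non-negativity.} For $\alpha>1$ we have $\Psi_j^\alpha(x)=(1-\delta)x_j^\alpha$. This is non-negative because $\delta\le 1$ gives $1-\delta\ge 0$, and $x_j^\alpha\ge 0$. For $\alpha=1$, $\Psi_j^1(x)=x_j^1+\delta\sum_{\beta=2}^m x_j^\beta$ is a sum of non-negative terms, since $\delta>0$ and every $x_j^\beta\ge 0$.

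\emph{Mass conservation.} This is the only step with any computation. For each fixed class $j$ I will sum over sectors and show that the class total is unchanged:
\begin{equation*}
\sum_{\alpha=1}^m \Psi_j^\alpha(x) = x_j^1+\delta\sum_{\beta=2}^m x_j^\beta+(1-\delta)\sum_{\alpha=2}^m x_j^\alpha = \sum_{\alpha=1}^m x_j^\alpha .
\end{equation*}
The $\delta$-terms cancel exactly. Summing this identity over $j=1,\dots,n$ then gives $\sum_{j,\alpha}\Psi_j^\alpha(x)=1$.

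There is no real obstacle. The only points needing care are that the endpoint $\delta=1$ is allowed (it gives $1-\delta=0$, which is still non-negative) and that the index sets in the two sums over $\alpha\ge 2$ match, so the cancellation is exact. As a by-product, the class totals $\sum_\alpha x_j^\alpha$ are preserved for each $j$. I will record this remark because it immediately implies that the global income $\mu$ of Property~\ref{prop2} is unchanged at audit jumps, which will be useful later.
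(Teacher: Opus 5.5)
Your proof is correct and matches the paper's argument step for step: non-negativity from $\delta\in(0,1]$ and $x_j^\alpha\ge 0$, then the per-class identity $\sum_{\alpha}\Psi_j^\alpha(x)=\sum_{\alpha}x_j^\alpha$, summed over $j$ to obtain total mass $1$. Your remark that the class totals are preserved, so the global income is unchanged at audit jumps, is the same observation the paper records right after its proof.
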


\begin{proof}
Let $\delta \in (0,1]$ and $x \in \mathcal{X}$. By construction, it holds that 
\begin{equation*}
    \Psi^\alpha_j(x) \geq 0, \quad \text{for all } j=1,\ldots,n \ \text{and} \ \alpha=1,\ldots,m.
\end{equation*}
Moreover, for each income class $j \in \{1,\ldots,n\}$, we have
\begin{align*}
 \sum_{\alpha=1}^m \Psi_j^\alpha(x)
&= \left(x_j^1 + \delta\sum_{\beta=2}^m x_j^\beta\right) + \sum_{\alpha=2}^m (1-\delta)x_j^\alpha \\
&= x_j^1 + \delta\sum_{\beta=2}^m x_j^\beta + \sum_{\alpha=2}^m x_j^\alpha - \delta\sum_{\alpha=2}^m x_j^\alpha \\
&= x_j^1 + \sum_{\alpha=2}^m x_j^\alpha \\
&= \sum_{\alpha=1}^m x_j^\alpha.
\end{align*}
Therefore, since $x\in \mathcal{X}$, and thus $\sum_{j=1}^n\sum_{\alpha=1}^m x_j^\alpha=1$, we also have that $\sum_{j=1}^n\sum_{\alpha=1}^m \Psi_j^\alpha(x)=1$. This implies the statement.
\end{proof}

Note that, Proposition \ref{prop:audit} readily implies that
\begin{equation*}
\sum_{j=1}^n r_j \sum_{\alpha=1}^m \Psi_j^\alpha(x) = \sum_{j=1}^n r_j \sum_{\alpha=1}^m x_j^\alpha.
\end{equation*}
Therefore, both population conservation (cf. Property \ref{prop1}) and global income conservation (cf.~Property~\ref{prop2}) are preserved by the audit update map $\Psi$ \eqref{audit_map}.

Proposition \ref{prop:audit} also implies that the audit PDMP is well-defined. Specifically, starting from $x_0 \in \mathcal{X}$, there exists a unique solution 
\begin{equation*}
    X(t)=\phi (t,x_0), \qquad t \in [0,T_1),
\end{equation*}
of the $n\times m$-dimensional ODE defined by \eqref{population_eq}, due to Property \ref{prop0}. At the random  audit event time $T_1$, the audit update map $\Psi$ \eqref{audit_map} determines the next initial condition 
\begin{equation*}
    x_1=X(T_1)=\Psi (X(T_1^-)),
\end{equation*}
which is an element of $\mathcal{X}$ due to Proposition \ref{prop:audit}. Therefore, there exists a unique solution 
\begin{equation*}
    X(t)=\phi (t-T_1,x_1), \qquad t \in [T_1,T_2),
\end{equation*}
of ODE \eqref{population_eq}, again due to Property \ref{prop0}. This procedure repeats and the audit PDMP
\begin{equation*}
    X(t)=\phi(t-T_i,x_i), \qquad t \in [T_i,T_{i+1}), \qquad i\geq 0,
\end{equation*}
where
\begin{equation*}
    x_i=X(T_i)=\Psi (X(T_i^-)) \in \mathcal{X},
\end{equation*}
is defined for all $t\geq 0$. Moreover, the process $X(t)$ has state space $\mathcal{X}$ and satisfies population conservation (Property \ref{prop1}) and global income conservation (Property \ref{prop2}).

Note also that, for any $x \in \mathcal{X}$, the map $x \mapsto \lambda_{\text{audit}}(x)$ \eqref{audit_rate} is measurable and continuous since it is linear in the state variables. Further, since the fractions satisfy $\sum_{j=1}^n\sum_{\alpha=1}^m x_j^\alpha=1$, 
a uniform bound is obtained
 \begin{equation*}
  0 \;\leq\; \lambda_{\text{audit}}(x) \;\leq\; \gamma, \qquad x \in \mathcal{X}.   
 \end{equation*}
The boundedness of the jump rate ensures non-explosion of the sequence of jump times $T_i$, i.e.
\begin{equation*}
\sup_{i\ge 1} T_i = \infty \quad \text{a.s}.    
\end{equation*}


\begin{remark}[A note on stationarity]
\label{Rem_St_aud}
Repeated audits systematically reduce the population fractions in the non-compliant sectors ($\alpha>1$) through the map $\Psi$ \eqref{audit_map}. This causes the process to concentrate in the compliant sector ($\alpha=1$) over time. As a result, the system is driven toward an equilibrium point $x^*$ dominated by compliance. 

Formally, this behaviour suggests two key properties, which are both supported by our numerical experiments in Section \ref{sec:audit_simulation}. First, for any initial state $x_0$ satisfying the conditions in Property \ref{prop0}, the distribution of $X(t)$ converges to a unique stationary distribution given by the Dirac measure $\boldsymbol{\delta}_{x^*}^{x_0}$, concentrated at a point $x^*$ dominated by compliance. Second, for different initial states $x_0^{(1)}, x_0^{(2)}, x_0^{(3)}, \ldots$ that, in addition to the conditions in Property \ref{prop0}, satisfy the conditions in Property \ref{prop3}, the corresponding stationary distributions $\boldsymbol{\delta}_{x^*}^{x_0^{(1)}}, \boldsymbol{\delta}_{x^*}^{x_0^{(2)}}, \boldsymbol{\delta}_{x^*}^{x_0^{(2)}}, \ldots$ are all concentrated at the same point $x^*$. This forms an analogue to Property \ref{prop3} in the stochastic PDMP case. 
\end{remark}

\subsection{Behavioural imitation}
\label{Sec_behave}

Behavioural imitation captures the tendency of individuals to adopt the tax evasion behaviour of their peers, reflecting the well-documented role of social norms in compliance. Individuals often act as conditional cooperators, basing their decisions on observed behaviour within their reference group. Empirical studies strongly support this idea. In fact, in \cite{alm2011ethics} the authors show that taxpayers who believe evasion is common are more likely to justify it themselves, and \cite{luttmer2014tax} finds positive correlations between the compliance of an individual and that of their social network. Moreover, formal economic models acknowledge that evasion can spread through social learning \cite{myles1996model}, and experiments confirm that observing non-compliance among similar agents increases one's own evasion \cite{fortin2007tax}.

We model imitation as a process that exclusively promotes higher evasion. Logically, individuals are unlikely to imitate behaviour that voluntarily increases their tax burden, but instead, learning about successful evasion strategies can undermine the perceived obligation to comply, making non-compliant behaviour contagious, see for example \cite{bicchieri2009right}. Hence, for simplicity, we assume imitation occurs only within the same income class, as individuals are primarily influenced by their socioeconomic peers \cite{bicchieri2009right}.

This mechanism is formalised as an imitation PDMP, acting as a counterpoint to the audit process. During an imitation event, a fraction of individuals in each sector transitions to the next higher evasion sector. This models the contagion of non-compliance, systematically reducing the population's overall compliance level. In this setting, an individual's income class remains unchanged during these jumps.

\paragraph{Mathematical formulation of the imitation PDMP.}

Similar to the audit PDMP, the system is modelled as a single-mode $n \times m$-dimensional PDMP
\begin{equation*}
    Z(t)=X(t)=(X_j^\alpha(t))_{j=1,\ldots,n;\alpha=1,\ldots,m}=\phi(t-T_i,x_i), \qquad t \in [T_i,T_{i+1}), \quad i\geq 0,
\end{equation*}
where $\phi$ is the flow of the ODE defined by \eqref{population_eq}, $T_0:=0$, and the $T_i$, $i\geq 1$, correspond to random imitation event times. This means that the process $X(t)$ again follows the $n\times m$-dimensional ODE \eqref{population_eq} piecewise on each inter-event interval $[T_i,T_{i+1})$, with a different initial condition $x_i=X(T_i)$ for each ``piece'', which reflects the  reallocation movements of population fractions caused by the imitation event happening at time $T_i$.

The random imitation events are also governed by an intensity function that depends on the current state of the population, particularly on the balance between compliant and non-compliant behaviour. A natural choice is
\begin{equation}\label{imitation_rate}
\lambda_{\text{imitation}}(x) = \eta \sum_{j=1}^{n} x_j^1,    
\end{equation}
where $\eta>0$ is a tunable parameter controlling the frequency of imitation events. Since this rate is proportional to the fraction of compliant individuals, imitation becomes more frequent when compliance is high, reflecting the idea that a larger compliant population generates more opportunities for transmitting information about evasion and thus adopting more evasive behaviour.

A constant imitation rate, $\lambda_{\text{imitation}}(x)\equiv \eta_0>0$, arises as the special case of the general specification in which the dependence on the compliant share is suppressed. This regime corresponds to spontaneous imitation events, e.g. socially driven behavioural shifts, that occur at random times according to a homogeneous Poisson process, independently of the current distribution of evasion levels.

At each imitation event time $T_i$, a fixed proportion $\epsilon\in(0,1]$ of individuals in every evasion sector is instantaneously reassigned to the next higher evasion sector, reflecting upward imitation towards more evasive behaviour. Formally, the imitation update is described by a map $\Phi$, defined component-wise by
\begin{equation}\label{imitation_map}
  \Phi^\alpha_j(x) =
  \begin{cases}
    (1-\epsilon)\,x^\alpha_j, & \alpha=1,\\[0.8ex]
    (1-\epsilon)\,x^\alpha_j + \epsilon\,x_j^{\alpha-1}, & 2\le \alpha \le m-1,\\[1.0ex]
    x^\alpha_j + \epsilon\,x_j^{\alpha-1}, & \alpha=m,
  \end{cases}
  \qquad j=1,\dots,n.
\end{equation}
Thus, for each income class $j$, a fraction $\epsilon x_j^\alpha$ of every sector $\alpha<m$ moves upward to sector $\alpha+1$, while the highest evasion sector $\alpha=m$ only receives inflow from $\alpha=m-1$. The fully compliant sector $\alpha=1$ loses a proportion $\epsilon x_j^1$ of its mass, reflecting the adoption of more evasive behaviour. The associated transition kernel is deterministic,
\begin{equation*}
   Q_{\text{imitation}}(x,\cdot) = \boldsymbol{\delta}_{\Phi(x)}(\cdot), 
\end{equation*}
and satisfies $Q_{\text{imitation}}(x,\{x\})=0$ whenever 
some class contains a positive fraction of compliant individuals or intermediate evaders. In this way, the imitation PDMP captures the systematic upward movement across evasion sectors induced by behavioural imitation.

\begin{remark}
    Stochastic extensions of the transition kernel are again possible, for instance by choosing~$\epsilon$ randomly in $(0,1]$ at each imitation event time.
\end{remark}

\paragraph{Well-posedness of the imitation PDMP.}

As in the audit setting, the imitation PDMP is well-defined and preserves the fundamental structural properties of the deterministic tax evasion model.

\begin{proposition}\label{prop:imitation}
For every $\epsilon \in (0,1]$ and $x\in \mathcal{X}$ \eqref{State_space}, the imitation update $\Phi$ \eqref{imitation_map} satisfies $\Phi(x) \in \mathcal{X}$.
\end{proposition}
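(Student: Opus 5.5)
The plan mirrors the proof of Proposition~\ref{prop:audit}: check the two defining conditions of $\mathcal{X}$ \eqref{State_space} for $\Phi(x)$, namely componentwise non-negativity and unit total mass. Throughout, fix $\epsilon\in(0,1]$ and $x\in\mathcal{X}$. As in the audit case, I will prove the stronger statement that $\Phi$ preserves the mass within each income class $j$. Summing over $j$ then gives the claim, and it also yields income conservation for the imitation update.

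\emph{Non-negativity.} Since $x_j^\alpha\ge 0$ and $\epsilon\in(0,1]$, both coefficients $1-\epsilon$ and $\epsilon$ are non-negative. Each of the three cases in \eqref{imitation_map} is therefore a non-negative combination of non-negative entries, so $\Phi_j^\alpha(x)\ge 0$ for all $j,\alpha$. This step is immediate.

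\emph{Class-wise mass conservation.} Fix $j$ and assume first $m\ge 2$. I will write
\begin{equation*}
\sum_{\alpha=1}^m \Phi_j^\alpha(x) = \sum_{\alpha=1}^{m-1}(1-\epsilon)x_j^\alpha + x_j^m + \epsilon\sum_{\alpha=2}^{m} x_j^{\alpha-1}.
\end{equation*}
Here the first sum collects the $(1-\epsilon)$-terms for $\alpha\le m-1$, which includes the case $\alpha=1$. The last sum collects the inflow terms for $\alpha\ge 2$, which includes $\alpha=m$. After reindexing, the last sum equals $\epsilon\sum_{\alpha=1}^{m-1}x_j^\alpha$. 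It cancels exactly the $-\epsilon$ part of the first sum, leaving $\sum_{\alpha=1}^m x_j^\alpha$.

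\emph{Main obstacle.} The only delicate point is the bookkeeping of the boundary cases of \eqref{imitation_map}. The definition splits into $\alpha=1$, $2\le\alpha\le m-1$ and $\alpha=m$. When $m=2$ the middle range is empty, and the cases $\alpha=1$ and $\alpha=m$ must then be read as giving $(1-\epsilon)x_j^1$ and $x_j^2+\epsilon x_j^1$. I will check that the telescoping above still applies in this case, which it does because the two sums are over $\alpha\le m-1$ and $\alpha\ge 2$ independently of the middle case. The degenerate case $m=1$ makes the first and third cases of \eqref{imitation_map} conflict. I will exclude it, since imitation requires at least two sectors, or equivalently adopt the convention that $\Phi$ is the identity there. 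Finally, summing the class-wise identity over $j$ and using $x\in\mathcal{X}$ gives $\sum_{j,\alpha}\Phi_j^\alpha(x)=1$. Hence $\Phi(x)\in\mathcal{X}$, and class-wise conservation also gives $\sum_j r_j\sum_\alpha \Phi_j^\alpha(x)=\sum_j r_j\sum_\alpha x_j^\alpha$.
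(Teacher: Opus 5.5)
Your proposal is correct and follows essentially the same route as the paper: non-negativity from $\epsilon\in(0,1]$, then class-wise mass conservation by reindexing $\epsilon\sum_{\alpha=2}^{m}x_j^{\alpha-1}=\epsilon\sum_{\alpha=1}^{m-1}x_j^\alpha$ to cancel the $-\epsilon$ terms, and finally summing over $j$. Your explicit treatment of the boundary cases $m=2$ (empty middle range) and $m=1$ (excluded, since the first and third cases of \eqref{imitation_map} conflict) is accurate and adds detail that the paper leaves implicit.
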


\begin{proof}
Since $x \in \mathcal{X}$ and $\epsilon \in (0,1]$, by construction it holds that 
\begin{equation*}
    \Phi^\alpha_j(x) \geq 0, \quad \text{for all } j=1,\ldots,n \ \text{and} \ \alpha=1,\ldots,m.
\end{equation*}
Moreover, for each income class $j \in \{1,\ldots,n\}$, we have
\begin{align*}
\sum_{\alpha=1}^m \Phi^\alpha_j(x)
&= (1-\epsilon)x^1_j + \sum_{\alpha=2}^{m-1} \left[(1-\epsilon)x^\alpha_j + \epsilon x^{\alpha-1}_j\right] + (x^m_j + \epsilon x^{m-1}_j) \\
&=x_j^1-\epsilon x_j^1 + \sum_{\alpha=2}^{m-1} x_j^\alpha - \epsilon \sum_{\alpha=2}^{m-1}x_j^\alpha + \epsilon \sum_{\alpha=2}^{m-1} x_j^{\alpha-1}+ x_j^m +\epsilon x_j^{m-1} \\
&= \sum_{\alpha=1}^m x^\alpha_j - \epsilon \sum_{\alpha=1}^{m-1} x^\alpha_j + \epsilon \underbrace{\sum_{\alpha=2}^{m} x^{\alpha-1}_j}_{=\sum\limits_{\alpha=1}^{m-1} x^{\alpha}_j} 
= \sum_{\alpha=1}^m x^\alpha_j.
\end{align*}
Therefore, since $x\in \mathcal{X}$, and thus $\sum_{j=1}^n\sum_{\alpha=1}^m x_j^\alpha=1$, we also have that $\sum_{j=1}^n\sum_{\alpha=1}^m \Phi_j^\alpha(x)=1$, which completes the proof.
\end{proof}

Similar to the audit case, Proposition \ref{prop:imitation} readily implies that \begin{equation*}
\sum_{j=1}^n r_j \sum_{\alpha=1}^m \Phi_j^\alpha(x) = \sum_{j=1}^n r_j \sum_{\alpha=1}^m x_j^\alpha.
\end{equation*}
This is because $\Phi$ only reassigns individuals between evasion sectors within the same income class $j$, the total fraction $\sum_{\alpha=1}^m x^\alpha_j$ in each class remaining unchanged. Thus, both population conservation (cf. Property \ref{prop1}) and global income conservation (cf. Property~\ref{prop2}) are preserved by the imitation update map $\Phi$ \eqref{imitation_map}.

Analogously to the audit case, Proposition \ref{prop:imitation} implies that the imitation PDMP
\begin{equation*}
    X(t)=\phi(t-T_i,x_i), \qquad t \in [T_i,T_{i+1}), \qquad i\geq 0,
\end{equation*}
where
\begin{equation*}
    x_i=X(T_i)=\Phi (X(T_i^-)) \in \mathcal{X},
\end{equation*}
is defined for all $t\geq 0$, that it has state space $\mathcal{X}$, and that it satisfies population conservation (Property \ref{prop1}) and global income conservation (Property \ref{prop2}). In addition, the imitation jump rate \eqref{imitation_rate} is measurable and bounded, which guarantees non-explosion of the sequence of jump times~$T_i$.

\begin{remark}[A note on stationarity]\label{rem:imitation_stationary}
Repeated imitation events systematically push population mass toward higher evasion sectors through the map $\Phi$ \eqref{imitation_map}. Therefore, the system is driven toward a stationary state $x^*$, which is characterised by significantly higher levels of evasion compared to the deterministic model, with substantial mass concentrated in the highest evasion sector ($\alpha=m)$. 

Formally, these dynamics suggest the following two properties, which are supported by our numerical experiments in Section~\ref{Imit_sim}. First, for any initial state $x_0$ satisfying the conditions in Property \ref{prop0}, the distribution of $X(t)$ converges to a Dirac measure $\boldsymbol{\delta}_{x^*}^{x_0}$ concentrated at $x^*$. Second, for different initial states $x_0^{(1)},x_0^{(2)},x_0^{(3)},\ldots$ that, in addition to the conditions in Property~\ref{prop0}, satisfy the conditions in Property \ref{prop3}, the corresponding $\boldsymbol{\delta}_{x^*}^{x_0^{(1)}},\boldsymbol{\delta}_{x^*}^{x_0^{(2)}},\boldsymbol{\delta}_{x^*}^{x_0^{(3)}},\ldots$ are all concentrated at the same point $x^*$. This forms again an analogue to Property \ref{prop3} in the stochastic
PDMP case. 
\label{Imit_convg}
\end{remark}

\subsection{An illustrative example and model simulation}
\label{ex:3x2}

To illustrate the mechanics of the audit and imitation PDMPs, we recall the illustrative example introduced in Section \ref{Back_sec} with $n=3$  income classes and $m=2$ evasion sectors, and extend it to the corresponding PDMP framework. All simulations are based on a modified Euler scheme combined with a suitable thinning procedure (see Appendix~\ref{PDMP_simulation} for  details). Unless stated otherwise, the uniform initial condition
\begin{equation}
   x_0 = \begin{pmatrix}
       1/6 & 1/6 \\
       1/6 & 1/6 \\
       1/6 & 1/6 
   \end{pmatrix} 
   \label{initial_x_val}
\end{equation}
is used in the simulations.

\subsubsection{Numerical illustration: audit mechanism}
\label{sec:audit_simulation}

We start by numerically investigating the audit PDMP introduced in Section \ref{audit_sec}.

\paragraph{PDMP sample paths.}

We first examine the case of a constant jump rate $\lambda_{\text{audit}}(x) \equiv 1$, corresponding to audits modelled by a homogeneous Poisson process with unit intensity. With audit effectiveness $\delta = 0.1$ in \eqref{audit_map}, each audit transfers $10\%$ of partial evaders (sector $\alpha=2$) to the compliant sector ($\alpha=1$).
\begin{figure}
    \centering
    \includegraphics[width=0.8\textwidth]{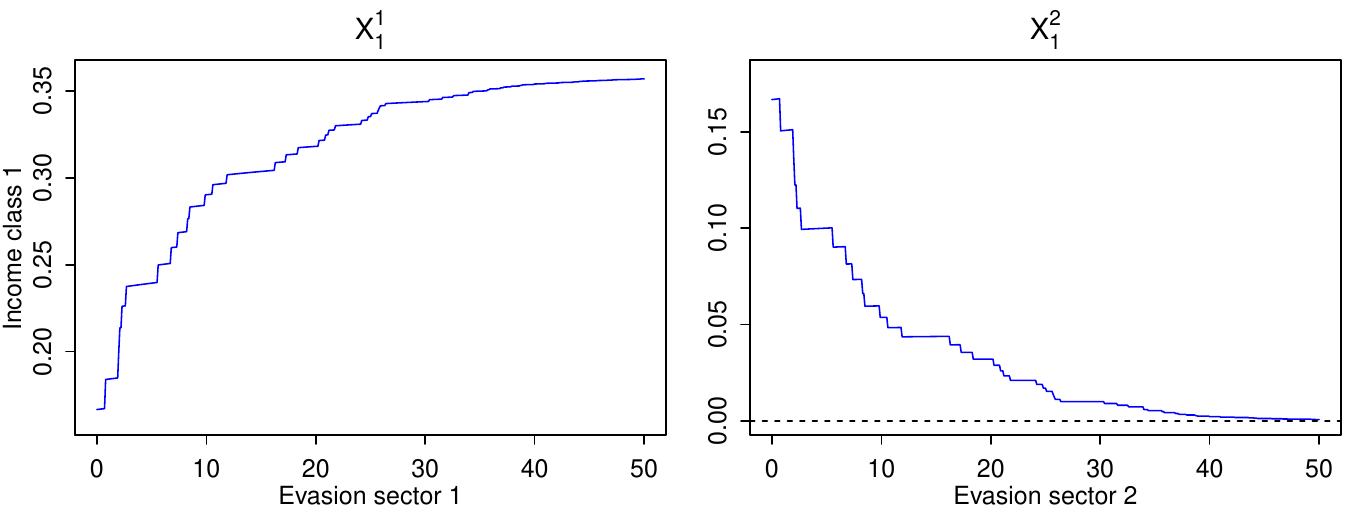}
    \caption{
        \textbf{Audit PDMP with constant jump rate.} Possible evolution of the population fractions in income class $j=1$ for the compliant 
        sector ($\alpha=1$, left panel) and the non-compliant sector 
        ($\alpha=2$, right panel). The audit effectiveness in \eqref{audit_map} is set to $\delta=0.1$.
    }
    \label{fig:constant_audit}
\end{figure}

The dynamics of this audit mechanism are illustrated in Figure~\ref{fig:constant_audit}. The compliant fraction increases at the jump times,  
while the non-compliant fraction decreases monotonically and asymptotically approaches zero over time. This systematic decay of the non-compliant population indicates that repeated audits drive the system toward a fully compliant state.

\begin{remark}
    Note that audit events are represented by kinks in the population fraction  trajectories. For visual clarity, the states immediately before and after each audit are connected by straight line segments, although these correspond to discontinuous transitions in the underlying PDMP. Between audits, the system follows the deterministic flow given by \eqref{population_eq}, resulting in the smooth segments of the trajectories.
\end{remark}

Now, we consider the more realistic case of a state-dependent audit rate defined by \eqref{audit_rate}, which reduces to
\begin{equation}\label{audit_rate2}
\lambda_{\text{audit}}(x) = \gamma \sum_{j=1}^{3} x_j^2,    
\end{equation}
for the illustrative example. This rate is proportional to the total non-compliant population, reflecting the typical allocation of audit resources toward sectors with higher evasion. 

\begin{figure}
    \centering
    \includegraphics[width=1.0\textwidth]{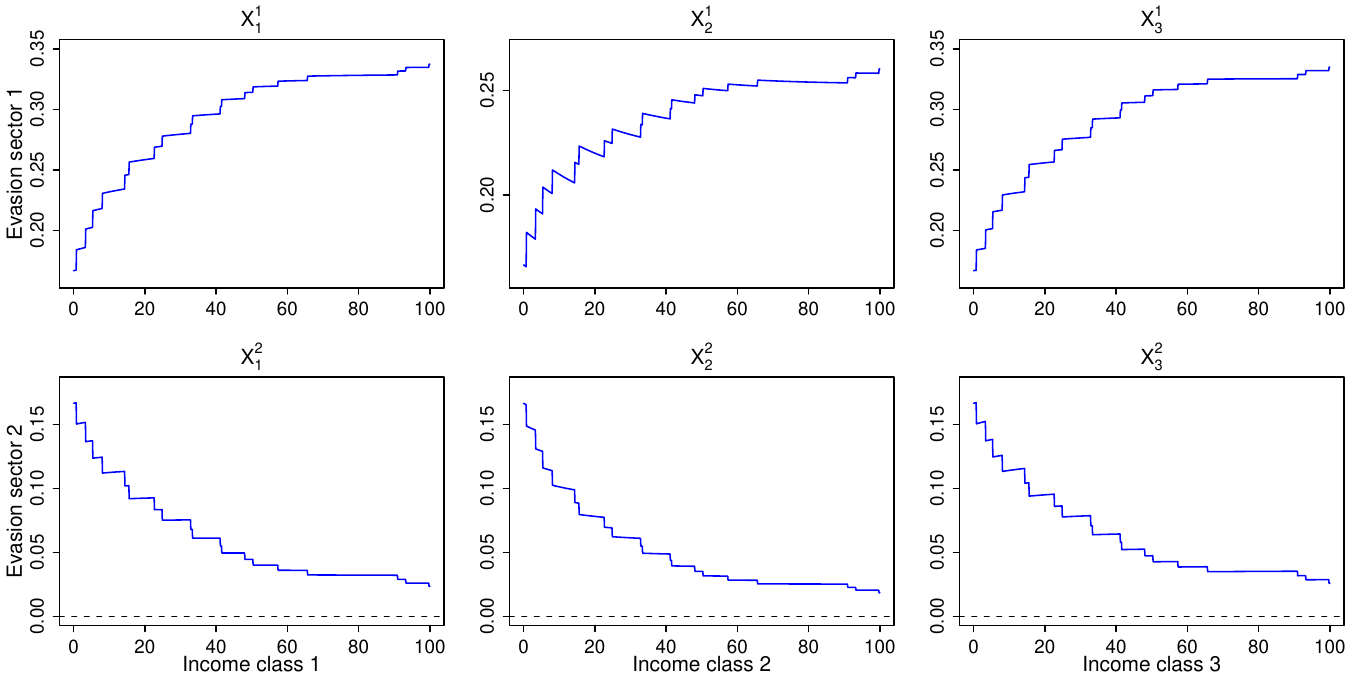}
    \caption{%
        \textbf{Audit PDMP with state-dependent jump rate.} Possible evolution of population fractions. The audit intensity in \eqref{audit_rate2} is set to $\gamma=1$, and the audit effectiveness in \eqref{audit_map} is set to $\delta=0.1$. 
    }
    \label{fig:state_dependent_audit}
\end{figure}

Figure~\ref{fig:state_dependent_audit} highlights qualitative differences from the constant-rate case shown in Figure~\ref{fig:constant_audit}. The most notable distinction is the reduced frequency of audit events, which arises from the state-dependent jump rate \eqref{audit_rate2}.  
As the non-compliant population fractions decrease over time, audits become less frequent. This creates a feedback mechanism where successful audits reduce the likelihood of future audit events. 
\paragraph{Long-time behaviour.}

To empirically investigate the long-time behaviour suggested in Remark~\ref{Rem_St_aud}, we examine paths of the audit PDMP obtained under the three different initial conditions \eqref{initial_conditions}. Different paths obtained under the same $x_0$, tend to the same equilibrium point $x^*$, so that one representative path for each $x_0$ is reported in Figure~\ref{fig:multi_initial}. 
Qualitatively, all three paths exhibit similar long-term behaviour: the population fractions of the non-compliant sector ($\alpha=2$, bottom panels) decrease over time, while those of the compliant sector ($\alpha=1$, top panels) become dominant. The apparent convergence of trajectories from different initial conditions $x_0$ provides preliminary numerical evidence for the existence of stationary distributions corresponding to Dirac measures $\boldsymbol{\delta}_{x^*}^{x_0}$ concentrated at the same point $x^*$, which is characterised by full compliance. 

\begin{figure}[h!]
    \centering
    \includegraphics[width=1.0\textwidth]{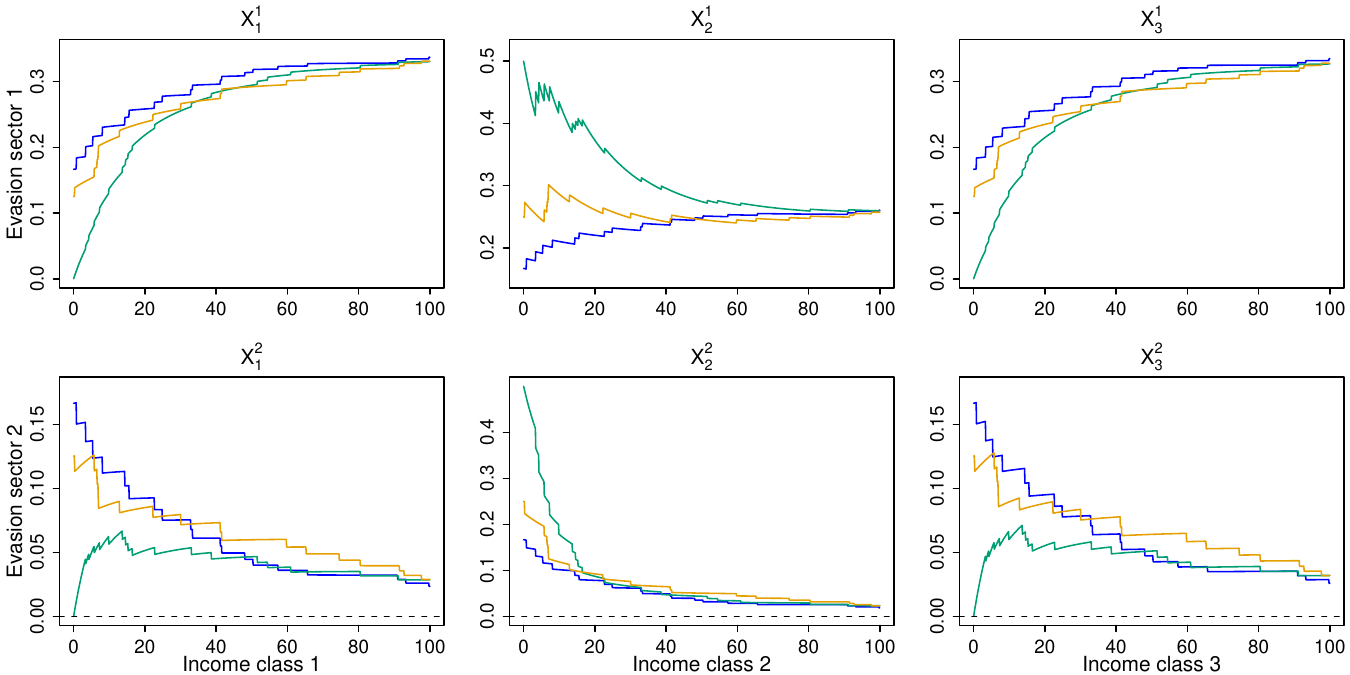}
    \caption{\textbf{Population fraction paths of the audit PDMP} based on the three initial conditions~\eqref{initial_conditions}. The audit intensity in \eqref{audit_rate2} is set to $\gamma=1$, and the audit effectiveness in \eqref{audit_map} is set to $\delta=0.1$.}
    \label{fig:multi_initial}
\end{figure}

\paragraph{Audit intensity and effectiveness.}

The audit mechanism depends critically on two policy parameters: the audit intensity $\gamma$ and the audit effectiveness $\delta$. We investigate their influence through systematic variation of these parameters.

\begin{figure}[h!]
    \centering
    \includegraphics[width=0.8\textwidth]{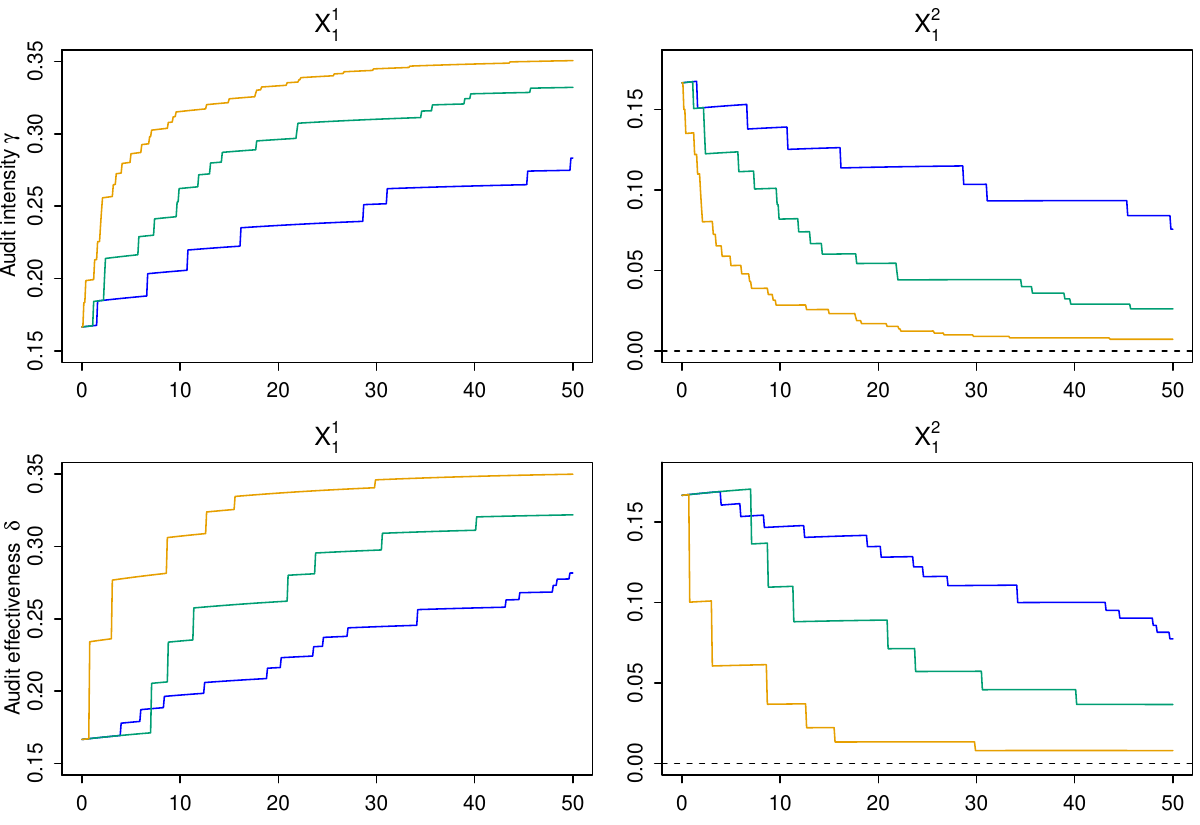}
    \caption{%
        \textbf{Effect of audit intensity $\gamma$} (top panels). Population fraction paths of income class~$1$ for different audit intensity values (blue: $\gamma=0.5$, green: $\gamma=2$, orange: $\gamma=10$). \textbf{Effect of audit effectiveness $\delta$} (bottom panels). Population fraction paths of income class $1$ for different audit effectiveness values (blue: $\delta=0.05$, green: $\delta=0.2$, orange: $\delta=0.4$).
    }
    \label{fig:gamma_sensitivity}
\end{figure}

Figure~\ref{fig:gamma_sensitivity} (top panels) shows that the audit intensity $\gamma$ regulates the temporal frequency of enforcement interventions. Higher values of $\gamma$ lead to more frequent audit events, accelerating the reallocation of population fractions toward compliance, consistent with the state-dependent jump rate \eqref{audit_rate2}.

Figure~\ref{fig:gamma_sensitivity} (bottom panels) illustrates how the audit effectiveness parameter $\delta$ influences the impact of each audit on the population. Larger values of $\delta$ lead to bigger instantaneous transfers from the non-compliant to the compliant sector during audit events, visible as larger steps in the trajectories. Thus, $\delta$ controls the efficiency of individual audit events in reallocating population toward compliance. 

This analysis suggests that both parameters independently influence compliance behaviour. The parameter $\gamma$ determines how frequently audits occur, while $\delta$ determines how effective each audit is in reducing evasion behaviour. Both parameters can achieve similar long-term compliance levels, but with different temporal patterns and potentially different economic costs.

\subsubsection{Numerical illustration: imitation mechanism}
\label{Imit_sim}

We now numerically examine the imitation PDMP introduced in Section \ref{Sec_behave}.

\paragraph{PDMP sample paths.}

We first consider a constant jump rate $\lambda_{\text{imitation}}(x) \equiv 1$, so that imitation events occur again according to a homogeneous Poisson process with unit intensity. The imitation strength parameter in \eqref{imitation_map} is set to $\epsilon = 0.1$. 

\begin{figure}[h!]
    \centering
    \includegraphics[width=0.8\textwidth]{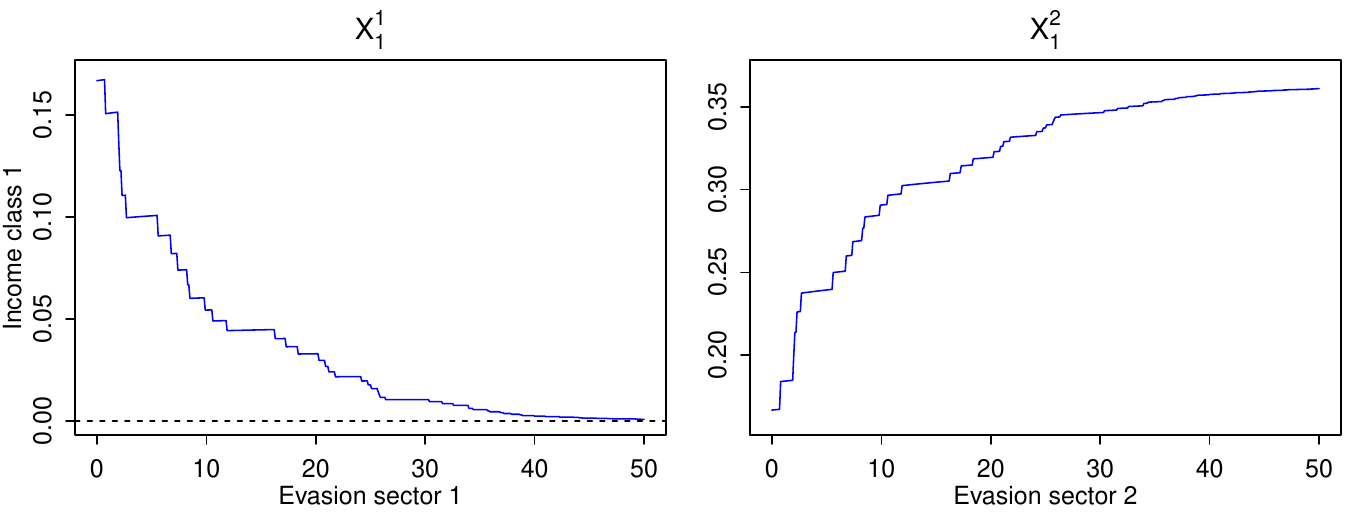}
    \caption{\textbf{Imitation PDMP with constant jump rate.} Possible evolution of the population fractions in income class $j=1$ for the compliant 
        sector ($\alpha=1$, left panel) and the non-compliant sector 
        ($\alpha=2$, right panel). The imitation strength parameter in \eqref{imitation_map} is set to $\epsilon = 0.1$.}
    \label{fig:imitation_constant}
\end{figure}

Figure~\ref{fig:imitation_constant} shows a typical trajectory of the imitation PDMP. The imitation events are represented by abrupt downward jumps in the compliant sector (left panel) accompanied by abrupt upward jumps of equal magnitude in the non-compliant sector (right panel), consistent with population conservation (cf. Property \ref{prop1}). The population fraction in the compliant group thus decreases at the imitation events and approaches zero as time evolves, while the non-compliant sector stabilises at a certain value. This reflects the tendency of the imitation dynamics to drive the system toward a fully non-compliant state. 

We now consider the more realistic state-dependent imitation jump rate defined by \eqref{imitation_rate}, which reads as
\begin{equation}
\label{eq:lambda-imitation}
\lambda_{\text{imitation}}(x) = \eta \sum_{j=1}^3 x_j^1,
\end{equation}
for the illustrative example. 

\begin{figure}
    \centering
    \includegraphics[width=1.0\textwidth]{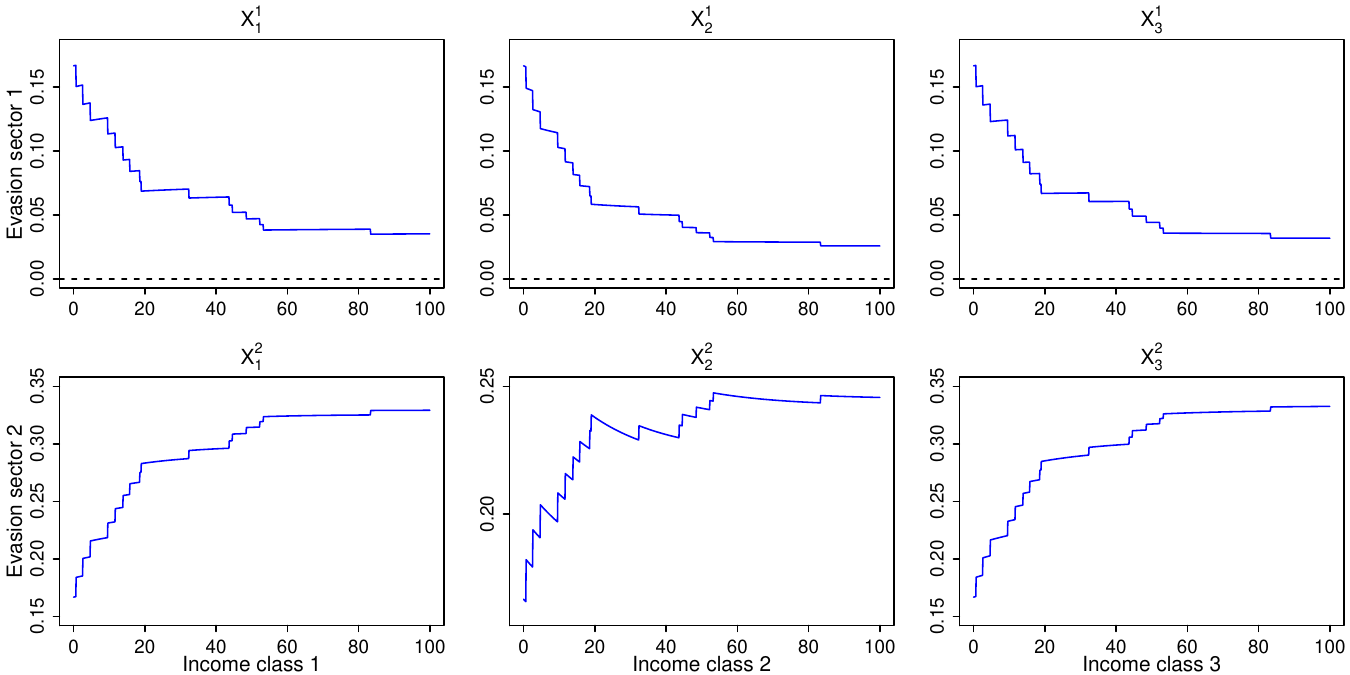}
    \caption{\textbf{Imitation PDMP with state-dependent jump rate.} Possible evolution of population fractions. The imitation frequency in \eqref{eq:lambda-imitation} is set to $\eta=1$, and the imitation strength in \eqref{imitation_map} is set to $\epsilon=0.1$.}
    \label{fig:imitation_state_dependent}
\end{figure}

Figure~\ref{fig:imitation_state_dependent} shows significantly fewer imitation events compared to the constant-rate case in Figure~\ref{fig:imitation_constant}. This reduction stems from the state-dependent nature of the jump rate \eqref{eq:lambda-imitation}. As the compliant population fractions decrease over time, imitation events become less frequent, creating a self-limiting effect.  

\paragraph{Long-time behaviour.}

\begin{figure}[h!]
    \centering
    \includegraphics[width=1.0\textwidth]{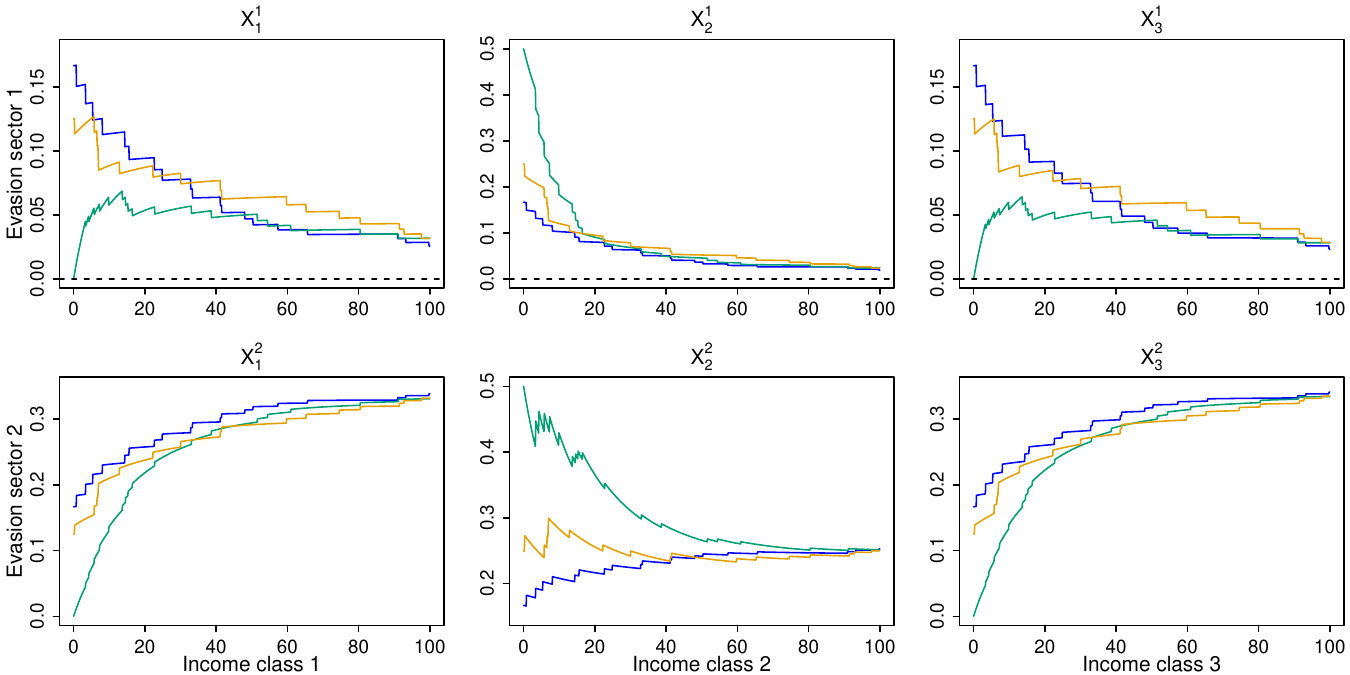}
    \caption{\textbf{Population fraction paths of the imitation PDMP} based on the three initial conditions \eqref{initial_conditions}. The imitation frequency in \eqref{eq:lambda-imitation} is set to $\eta=1$, and the imitation strength in \eqref{imitation_map} is set to $\epsilon=0.1$.}
    \label{fig:imitation_initial_conditions}
\end{figure}

To empirically assess the long-term behaviour suggested in Remark~\ref{rem:imitation_stationary}, we simulate trajectories of the imitation PDMP from the three initial states \eqref{initial_conditions} and report them in Figure~\ref{fig:imitation_initial_conditions}. Different paths obtained under the same $x_0$ tend to the same equilibrium point $x^*$, so that only one representative path for each $x_0$ is reported to ease figure readability. For all three trajectories, the population fractions of the compliant sector ($\alpha=1$, top panels) decrease over time, while those of the compliant sector ($\alpha=2$, bottom panels) become dominant. Specifically, all trajectories seem to converge to the same equilibrium $x^*$ from different initial states $x_0$, indicating the existence of underlying stationary distributions corresponding to  Dirac measures $\boldsymbol{\delta}_{x^*}^{x_0}$ concentrated at the same point $x^*$, which is characterised by full non-compliance. 

The audit and imitation mechanisms cause fundamentally different long-term behaviours. While audit dynamics (see Figure~\ref{fig:multi_initial}) drive the system toward full compliance through systematic reduction of non-compliant population fractions, imitation dynamics (see Figure \ref{fig:imitation_initial_conditions}) produce the opposite effect, driving the system toward full non-compliance. These opposing effects arise due to the directional nature of each mechanism.

\paragraph{Imitation frequency and strength.}

\begin{figure}
    \centering
    \includegraphics[width=0.8\textwidth]{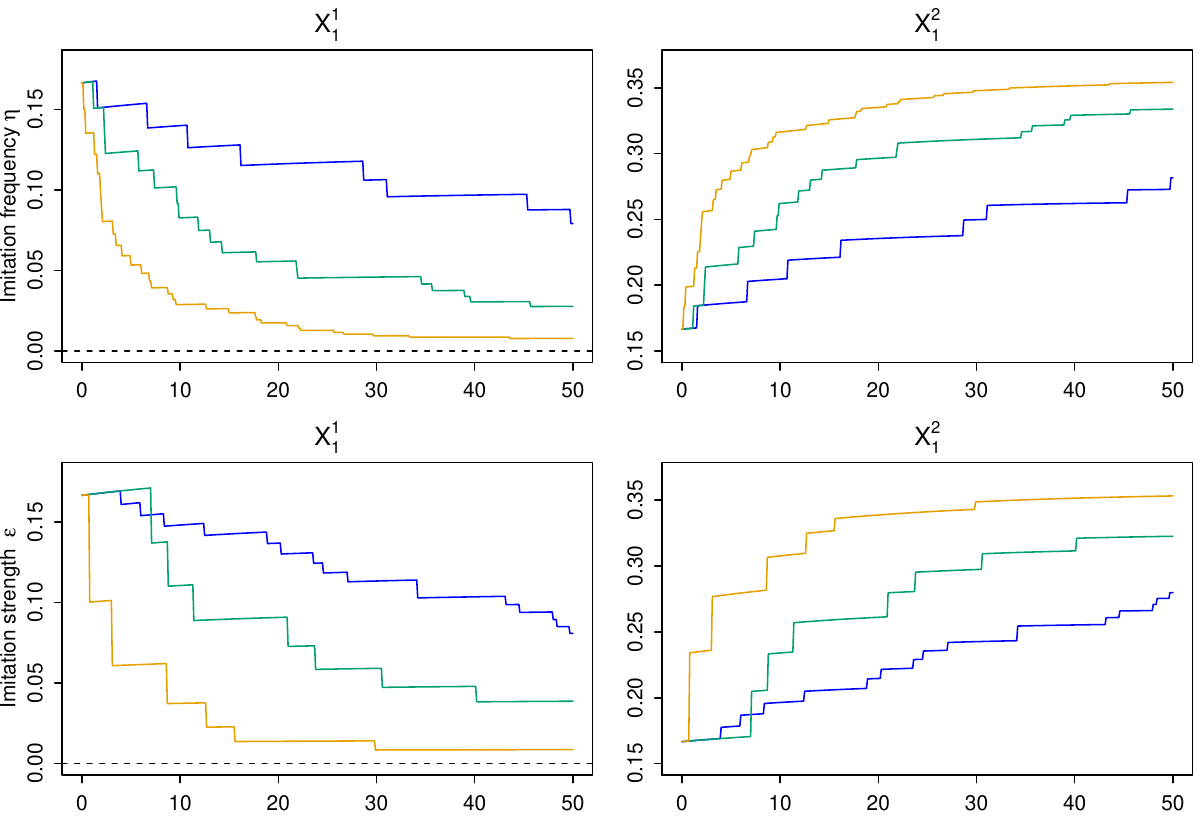}
    \caption{\textbf{Effect of imitation frequency $\eta$} (top panels). Population fraction paths of income class $1$ for different imitation frequency values (blue: $\eta=0.5$, green: $\eta=2$, orange: $\eta=10$). \textbf{Effect of imitation strength $\epsilon$} (bottom panels). Population fraction paths of income class $1$ for different imitation strength values (blue: $\epsilon=0.05$, green: $\epsilon=0.2$, orange: $\epsilon=0.4$).} 
    \label{fig:imitation_eta}
\end{figure}

The imitation mechanism depends critically on the imitation frequency parameter $\eta$ and the imitation strength parameter $\epsilon$. We investigate their impact on the population dynamics by systematically varying their values.

As shown in Figure~\ref{fig:imitation_eta} (top panels), higher values of $\eta$ increase the frequency of imitation events determined by the jump rate \eqref{eq:lambda-imitation}. Thus, the larger $\eta$, the faster is the reallocation of population fractions toward non-compliance. 

Moreover, Figure~\ref{fig:imitation_eta} (bottom panels) shows that larger values of $\epsilon$ produce more substantial jumps, as a greater fraction of the compliant population transitions to non-compliance at each imitation event. The parameter $\epsilon$ thus controls the magnitude of population transfers during jumps, while leaving the jump frequency unchanged. 

Both parameters accelerate convergence toward full non-compliance, but through distinct mechanisms: $\eta$ increases the rate of imitation events, while $\epsilon$ amplifies the population transfer per event. Consequently, increasing either parameter reduces the time required for the system to stabilise at a non-compliant equilibrium.

Recall that, in audit dynamics, both the intensity $\gamma$ and effectiveness $\delta$ accelerate convergence toward compliance, while in imitation dynamics, the frequency $\eta$ and strength $\epsilon$ accelerate convergence toward non-compliance. This fundamental asymmetry reflects the competing influences of enforcement policy versus social imitation effects in shaping taxpayer behaviour. 


\section{A PDMP model for combined audit and imitation dynamics}
\label{Combined_mod}

We now integrate the two behavioural mechanisms, audits and imitation, into a unified PDMP framework. The resulting dynamics combine deterministic kinetic evolution with stochastic transitions between compliance sectors, where each jump corresponds to either an audit or an imitation event. Unlike the individual mechanisms that drive the system toward extreme states (full compliance for audits and full non-compliance for imitation), the combined model permits intermediate equilibrium distributions that may better reflect realistic compliance patterns.

\subsection{Mathematical formulation}\label{sec:combined_PDMP}

Let $X(t) = (X_j^\alpha(t))_{j=1,\dots,n;\,\alpha=1,\dots,m}$ represent the population state at time $t$, where $X_j^\alpha(t)$ describes the fraction of individuals in income class $j$ and evasion sector $\alpha$. The process is characterised by the PDMP characteristic triple $(\phi, \Lambda, Q)$, detailed as follows. 

On each inter-event interval $[T_i,T_{i+1})$, the system follows the ODE dynamics \eqref{population_eq} with flow~$\phi$ and initial condition $x_i=X(T_i)$, i.e.
\begin{equation}\label{combined_flow}
    X(t)=\phi(t-T_i,x_i), \qquad t \in [T_i,T_{i+1}), \quad i\geq 0.
\end{equation}

The event times $T_i$ are governed by a state-dependent rate function given by
\begin{equation}
\label{eq:combined-rate}
    \Lambda(x) = \lambda_{\text{audit}}(x) + \lambda_{\text{imitation}}(x),
\end{equation}
where $\lambda_{\text{audit}}(x)$ and $\lambda_{\text{imitation}}(x)$ are as in \eqref{audit_rate} and \eqref{imitation_rate}, respectively.
This formulation captures the natural dependence; audit likelihood increases with non-compliant population, while imitation likelihood increases with compliant population. At each jump time, the event type is determined probabilistically as follows
\begin{equation*}
\mathbb{P}(\text{audit} \mid \text{jump at } t) = \frac{\lambda_{\text{audit}}(X(t))}{\Lambda(X(t))}, \quad 
\mathbb{P}(\text{imitation} \mid \text{jump at } t) = \frac{\lambda_{\text{imitation}}(X(t))}{\Lambda(X(t))}.
\end{equation*}
The constant rate $\Lambda(x)\equiv \gamma_0+\eta_0>0$ is considered as a special case of \eqref{eq:combined-rate}.

Moreover, the post-jump states are determined by the selected event type through the transition kernel
\begin{equation}\label{combined_kernel}
Q(x, \cdot) = 
\begin{cases}
\boldsymbol{\delta}_{\Psi(x)}, & \text{with probability } \lambda_{\mathrm{audit}}(x)/\Lambda(x), \\
\boldsymbol{\delta}_{\Phi(x)}, & \text{with probability } \lambda_{\mathrm{imitation}}(x)/\Lambda(x),
\end{cases}    
\end{equation}
where $\delta_y$ denotes the Dirac measure at $y$, and $\Psi$ and $\Phi$ are the audit and imitation transition maps defined in \eqref{audit_map} and \eqref{imitation_map}, respectively.

The combined PDMP $(\phi, \Lambda, Q)$ generates rich dynamical behaviour through the interaction of opposing forces: audits promote compliance while imitation encourages non-compliance. The resulting dynamics depend critically on the parameter pairs $(\gamma, \eta)$ and $(\delta, \epsilon)$, which are summarised in Table~\ref{tab_params}. 

\begin{table}
\centering
\begin{tabular}{clp{8.5cm}}
\toprule
\textbf{Parameter} & \textbf{Mechanism} & \textbf{Interpretation} \\
\midrule
$\gamma$ & Audit intensity &
Rate of audit events; increases with the fraction of non-compliant individuals. \\[0.6ex]

$\delta$ & Audit effectiveness &
Share of non-compliant individuals that become compliant after an audit. \\[0.6ex]

$\eta$ & Imitation frequency &
Rate of imitation events; increases with the fraction of compliant individuals. \\[0.6ex]

$\epsilon$ & Imitation strength &
Share of individuals shifting one sector upward at each imitation event. \\
\bottomrule
\end{tabular}

\caption{Parameters of the combined audit–imitation PDMP.}
\label{tab_params}
\end{table}

\begin{remark}[Well-posedness]
    Note that the definition of the transition kernel \eqref{combined_kernel}, together with Propositions \ref{prop:audit} (audit) and \ref{prop:imitation} (imitation), implies that population conservation (cf. Property~\ref{prop1}) and global income conservation (cf. Property~\ref{prop2}) to be preserved at the transition times $T_i$. Therefore, all ``initial conditions'' $x_i=X(T_i)$ (cf. \eqref{combined_flow}) are in $\mathcal{X}$ (cf. \eqref{State_space}). This allows to use Property \ref{prop0}, which implies that there exists a unique solution to ODE \eqref{population_eq} on each inter-event interval $[T_i,T_{i+1})$. This solution, in turn, has state space $\mathcal{X}$ and conserves population and global income, i.e. it satisfies Properties \ref{prop1} and \ref{prop2}.
\end{remark}

\begin{remark}[Stationarity]
    An analogue to Property \ref{prop3} is suggested by our numerical experiments in Section \ref{sec:num_exp_combinedPDMP} for the illustrative example setting. We observe that, for any initial state $x_0$ satisfying the conditions in Property~\ref{prop0}, the distribution of $X(t)$ converges to a unique stationary distribution~$\mathcal{D}^{x_0}$.  
    In addition, for different initial states $x_0^{(1)},x_0^{(2)},x_0^{(3)},\ldots$ satisfying the conditions in Properties \ref{prop0} and \ref{prop3}, the corresponding stationary distributions $\mathcal{D}^{x_0^{(1)}},\mathcal{D}^{x_0^{(2)}},\mathcal{D}^{x_0^{(3)}},\ldots$ are all the same. Moreover, for $\gamma=\eta$ and $\delta=\epsilon$ (i.e., when audit and imitation events yield balancing effects), this common stationary distribution is centred near the stationary state $x^*$ of the underlying ODE.
\end{remark}

\subsection{Numerical illustration}
\label{sec:num_exp_combinedPDMP}

In this section, we numerically investigate the combined PDMP presented in Section~\ref{sec:combined_PDMP}. In particular, we consider again the illustrative example setting with $n=3$ income classes and $m=2$ evasion sectors, and adapt it to the combined PDMP framework. Unless stated otherwise, the initial condition  \eqref{initial_x_val} is again used.

\paragraph{PDMP sample paths.}

We first consider the case of a constant jump rate $\Lambda(x)\equiv 1$ and fixed audit and imitation probabilities $\mathbb{P}(\text{audit} \mid \text{jump at } t)=\mathbb{P}(\text{imitation} \mid \text{jump at } t)=1/2$, so that audit and imitation events occur equally often on
average. The audit effectiveness and imitation strength parameters are set to $\delta=\epsilon=0.1$. 

\begin{figure}[h!]
    \centering
    \includegraphics[width=0.8\textwidth]{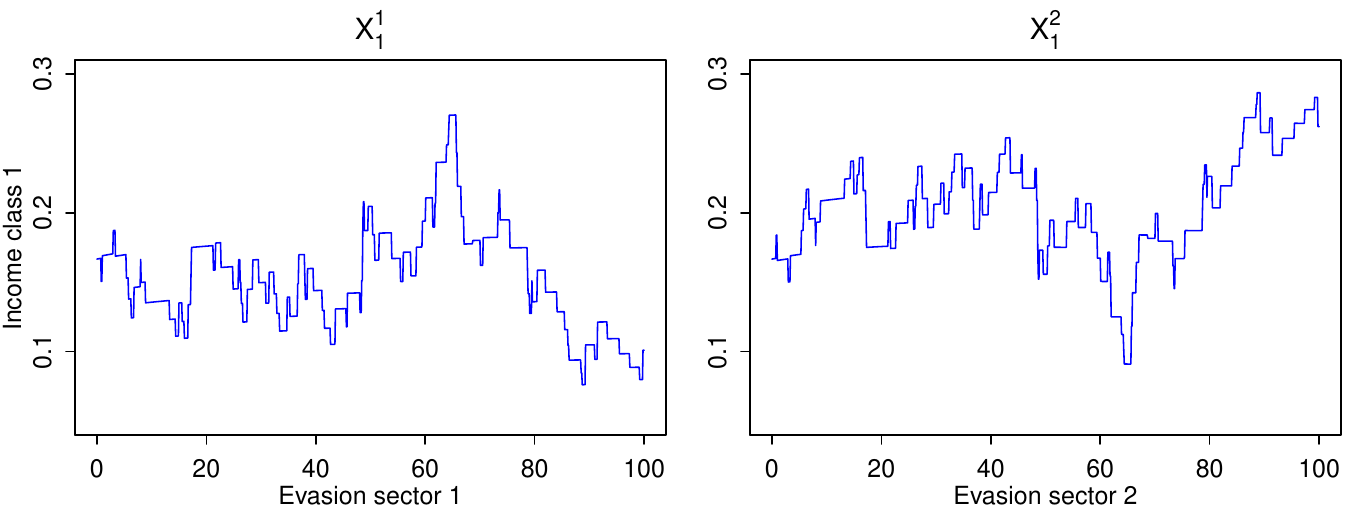}
    \caption{\textbf{Combined PDMP with constant jump rate.} 
        Possible evolution of the population fractions in income class $j=1$ for the compliant sector 
        ($\alpha=1$, left panel) and the non-compliant sector ($\alpha=2$, right panel). The audit effectiveness in \eqref{audit_map} and imitation strength in \eqref{imitation_map} are set to $\delta=\epsilon=0.1$.}
    \label{fig:combined_constant}
\end{figure}

Figure~\ref{fig:combined_constant} shows a possible trajectory of the combined PDMP for the first income class. 
The system now exhibits bidirectional movements. In particular, abrupt fraction reductions in the compliant sector correspond to imitation events, while fraction gains correspond to audit interventions. Similarly, fraction reductions in the non-compliant sector are du to audits, and fraction gains due to imitation events. In contrast to the 
pure audit or imitation models, neither sector dominates in the long term. Instead, since audit and imitation events occur with equal probability and possess symmetric jump magnitudes, the trajectory of the combined PDMP fluctuates around the solution of the underlying ODE model.\\

We now examine the combined PDMP under the full state-dependent jump rate \eqref{eq:combined-rate} with parameters $\gamma = 1$ (audit intensity) and $\eta = 2$ (imitation frequency).
The audit effectiveness and imitation strength are again set to $\delta=\epsilon=0.1$.

\begin{figure}
    \centering
    \includegraphics[width=1.0\textwidth]{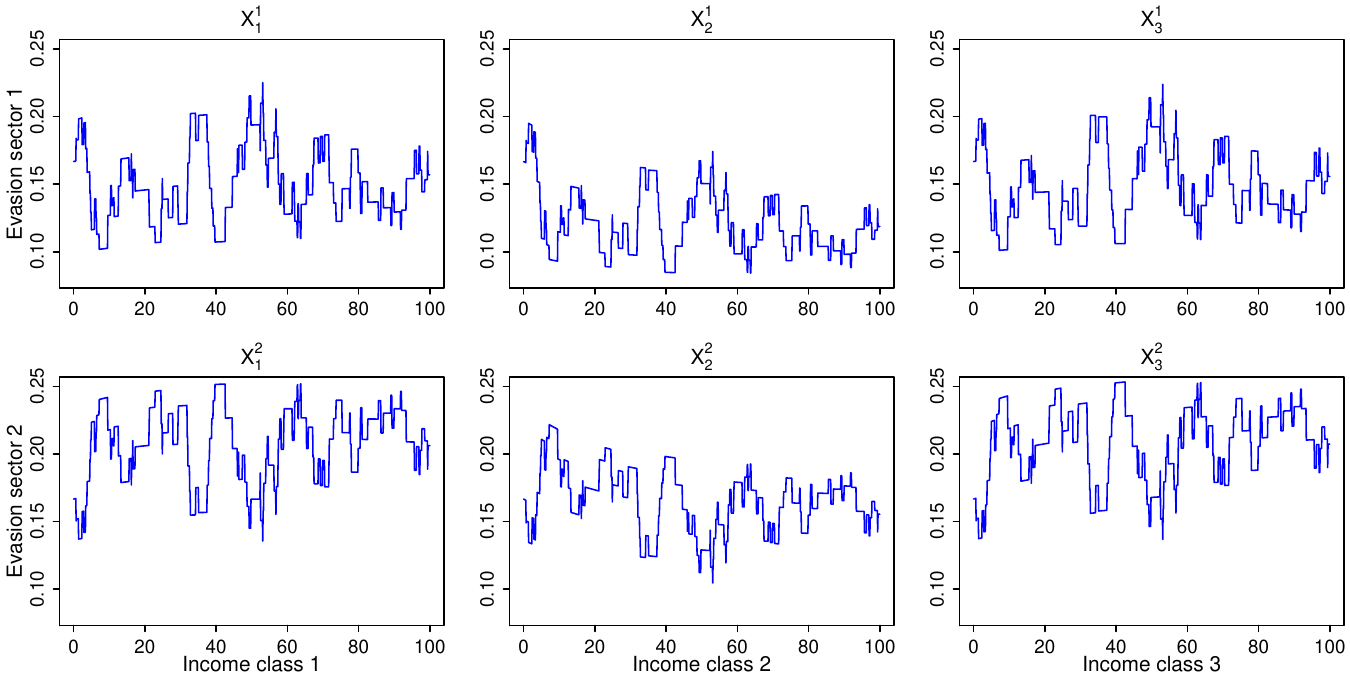}
    \caption{\textbf{Combined PDMP with state-dependent jump rate.} 
        Possible evolution of population fractions. The intensity parameters are $\gamma=1$ (audit) and $\eta=2$ (imitation), and the effectiveness parameters are $\delta=0.1$ (audit) and $\epsilon=0.1$ (imitation).}
    \label{fig:combined_state_dependent}
\end{figure}

The trajectory shown in Figure~\ref{fig:combined_state_dependent} exhibits increased jump frequency compared to the constant-rate case, due to the higher imitation rate parameter $\eta = 2$. In contrast to the pure audit or pure imitation model, where the system approaches a fully compliant or non-compliant state, in the combined model reallocation movements persist throughout the evolution of the system.

\paragraph{Long-time behaviour.}

\begin{figure}[h!]
    \centering
    \includegraphics[width=1.0\textwidth]{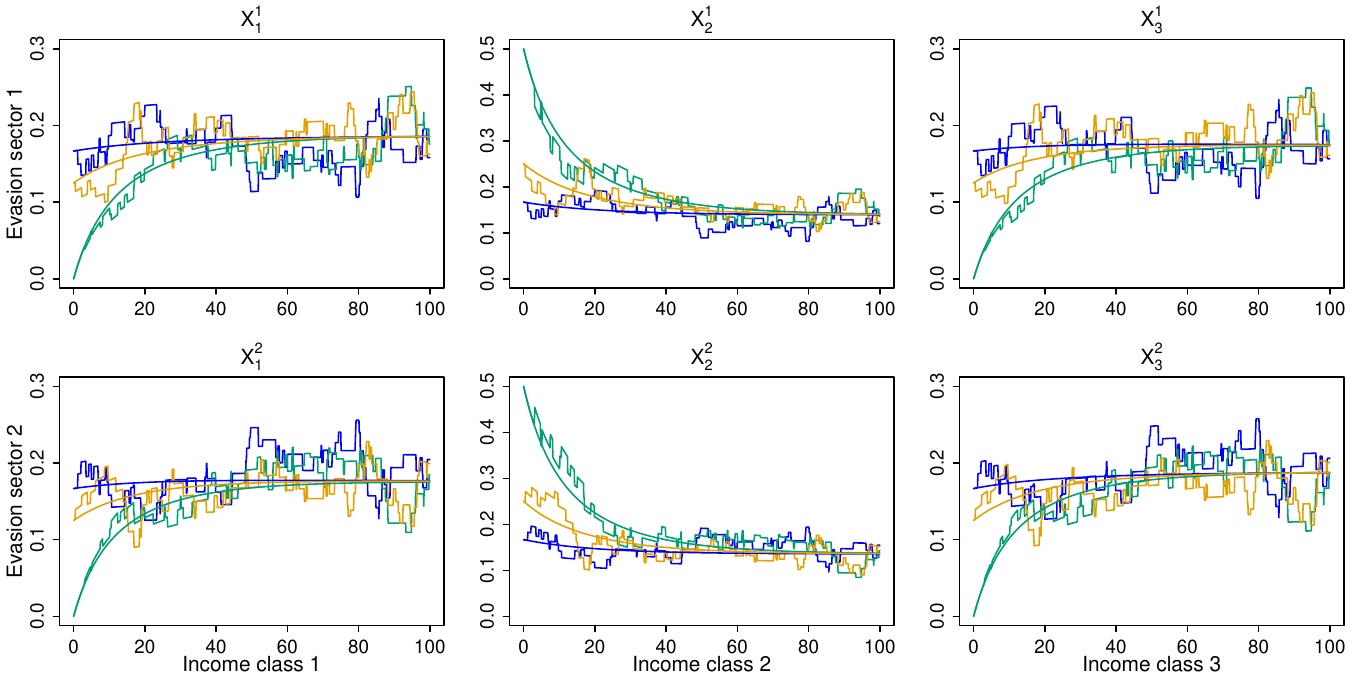}
    \caption{\textbf{Population fraction paths of the combined PDMP} based on the three initial conditions \eqref{initial_conditions}. The intensity parameters are $\gamma=1$ (audit) and $\eta=1$ (imitation), and the effectiveness parameters are $\delta=0.1$ (audit) and $\epsilon=0.1$ (imitation). The corresponding underlying ODE solutions are shown for reference (cf. Figure~\ref{fig:ODE_example_setting}).}
    \label{fig:combined_initial_conditions}
\end{figure}

To empirically investigate the long-time behaviour of the combined PDMP, we consider sample paths (for $\gamma=\eta=1$ and $\delta=\epsilon=0.1$), obtained under the three different initial conditions \eqref{initial_conditions}. Figure~\ref{fig:combined_initial_conditions} shows these paths, in comparison with the underlying ODE solutions (cf. Figure \ref{fig:ODE_example_setting}). Although the combined PDMP system does not converge to a single stationary state $x^*$ (or a dirac measure concentrated at $x^*$), the trajectories seem to approach a common region  concentrated near the common stationary state $x^*$ of the underlying ODE solutions. Note that different paths obtained under the same $x_0$ tend to the same region as time evolves, and thus only one representative path for each $x_0$ is reported to ease figure interpretability. These observations suggest the existence of stationary distributions~$\mathcal{D}^{x_0}$ for the combined PDMP model, which all coincide for different $x_0$ satisfying the conditions in Properties~\ref{prop0} and \ref{prop3}.

\begin{figure}
    \centering
    \includegraphics[width=1.0\textwidth]{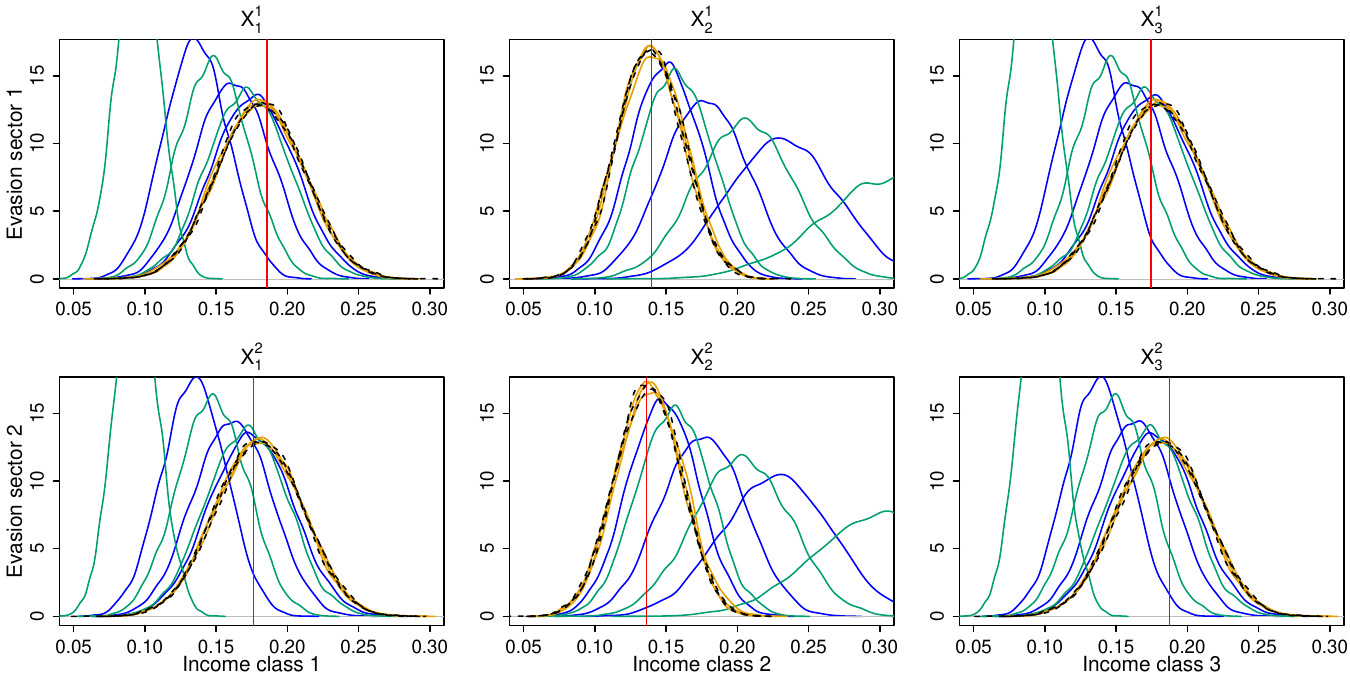}
    \caption{\textbf{Stationary distribution of the combined PDMP.} Empirical densities of population fractions at times $t=10$ (green), $t=20$ (blue), $t=90$ (orange), and $t=100$ (dashed black), computed from $10^4$ paths for each of the initial conditions $x_0^{(1)}$, $x_0^{(2)}$, and $x_0^{(3)}$ in \eqref{initial_conditions}. The vertical red lines indicate the stationary state $x^*$ of the underlying ODE. The intensity parameters are~$\gamma=1$ (audit) and $\eta=1$ (imitation), and the effectiveness parameters are $\delta=0.1$ (audit) and $\epsilon=0.1$ (imitation).
    }
    \label{fig:stationary_distribution}
\end{figure}

This is confirmed by Figure~\ref{fig:stationary_distribution}, where we report the empirical densities of the population fraction processes $X_j^\alpha(t)$ at different points in time, namely at $t=10$ (green), $t=20$ (blue), $t=90$ (orange), and $t=100$ (dashed black), computed from $10^4$ sample paths of the combined PDMP model for each of the initial conditions $x_0^{(1)}$, $x_0^{(2)}$, and $x_0^{(3)}$ in \eqref{initial_conditions}. For each $x_0$, the empirical densities differ initially (for smaller values of $t$; see the green and blue curves), but they overlap for sufficiently large times $t$ (see the orange and dashed black curves). Moreover, the densities at $t=90$ (orange) and $t=100$ (dashed black) all overlap for the different initial conditions. This provides strong numerical evidence that the system converges to the same stationary distribution from differential initial states satisfying the conditions of Properties \ref{prop0} and \ref{prop3}.

Note that, this common stationary distribution, for the setting considered, is centred near the common stationary state~$x^*$ of the underlying ODE model, which is indicated by the vertical red lines. 
However, a systematic deviation of the stationary mean from $x^*$ is visible. Specifically, for the lowest income class $j=1$, the stationary mean lies to the left of $x^*$ in the compliant sector ($\alpha=1$) and to the right of $x^*$ in the non-compliant sector ($\alpha=2$); for the highest income class $j=3$, the opposite pattern occurs (right in compliant, left in non-compliant); while for the middle income class $j=2$, the stationary mean is nearly aligned with $x^*$ in both sectors. These shifts persist even when the audit and imitation intensities are chosen symmetrically ($\gamma = \eta = 1$), ruling out a simple explanation based on unequal jump frequencies. Instead, the deviation arises from the interaction between the transition kernels $\Psi$ and $\Phi$ (defined in \eqref{audit_map} and \eqref{imitation_map}) and the nonlinear, class-dependent economic structure encoded in the payment probabilities (Table~\ref{tab:paymentProbs}). The boundary classes exhibit one-sided interaction patterns (the poorest never pay and the richest never receive), while the middle class both pays and receives. This structural asymmetry provides a plausible explanation for the larger and oppositely signed deviations in classes $j=1$ and $j=3$, and the smaller deviation in class $j=2$.

\begin{figure}[h!]
    \centering
    \includegraphics[width=0.8\textwidth]{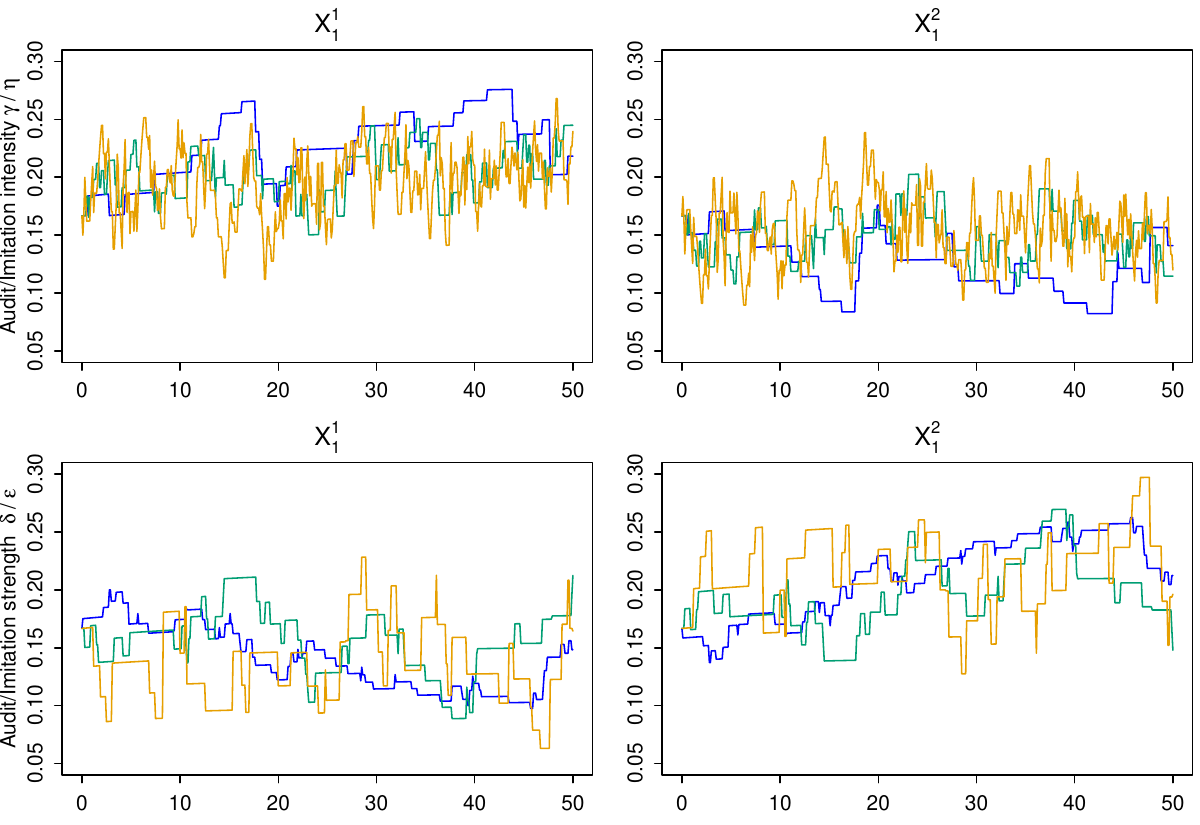}
    \caption{\textbf{Effect of audit and imitation intensities $(\gamma,\eta)$} (top panels). Population fraction paths of income class $1$ for different parameter combinations (blue: $(\gamma,\eta)=(1,0.5)$, green: $(\gamma,\eta)=(3,2)$, orange: $(\gamma,\eta)=(15,10)$). The strength parameters are set to $\delta=\epsilon=0.1$.
    \textbf{Effect of audit and imitation strengths $(\delta,\epsilon)$} (bottom panels). Population fraction paths of income class $1$ for different parameter combinations (blue: $(\delta,\epsilon)=(0.05,0.05)$, green: $(\delta,\epsilon)=(0.1,0.1)$, orange: $(\delta,\epsilon)=(0.2,0.2)$). The intensity parameters are set to $\gamma=1$ and $\eta=2$.}
    \label{fig:intensity_effect}
\end{figure}

\paragraph{Audit and imitation parameters.}

The dynamics of the combined PDMP depend critically on the model parameters $\gamma$, $\eta$, $\delta$ and $\epsilon$ (see Table \ref{tab_params}). In the following, we investigate their impact.

To assess the influence of the event-intensity parameters $\gamma$ (audit) and $\eta$ (imitation), we simulate trajectories for several parameter choices while maintaining fixed values for the other parameters. As shown in Figure~\ref{fig:intensity_effect} (top panels), larger values of $\gamma$ and $\eta$ increase 
the frequency of jump events, consistent with their roles as audit and imitation rate parameters in the 
state-dependent intensity function $\Lambda(x)$ (see \eqref{eq:combined-rate}). The magnitude of individual jumps remains unchanged 
across parameter variations, as jump sizes are determined exclusively by the strength parameters $\delta$ and $\epsilon$, which are held fixed here. 

Now, we vary the event strength parameters $\delta$ (audit) and $\epsilon$ (imitation) while keeping the other parameters fixed.
Figure~\ref{fig:intensity_effect} (bottom panels) demonstrates that increasing $\delta$ and $\epsilon$ amplifies 
the magnitude of population reallocations: higher audit effectiveness $\delta$ transfers larger 
population fractions toward compliance during audit events, while higher imitation strength $\epsilon$ 
yields more substantial shifts toward non-compliance during imitation events. Consequently, the larger these strength parameters, the more pronounced the reallocation movements in the trajectories become.

In summary, the two parameter pairs affect the combined PDMP model in different ways: $(\gamma,\eta)$ determine the frequency of audit and imitation events through the state-dependent jump rate 
$\Lambda(x)$ (see \eqref{eq:combined-rate}), while $(\delta,\epsilon)$ govern the magnitude of the corresponding population transfers 
during each event. Increasing either parameter pair leads to greater variability and 
stronger transitions between sectors, although through different mechanisms.

\section{Conclusion and perspective}
\label{conclusion}

This work extends the tax evasion model of Bertotti and Modanese \cite{bertotti2018mathematical} by embedding it in a piecewise deterministic Markov process (PDMP) framework. The original kinetic system, which describes income‐class transitions through deterministic interactions, was augmented by two stochastic mechanisms that capture essential behavioural responses observed in real economies: audits, which promote compliance, and imitation, which fosters evasion. Each mechanism was formalised as a PDMP with explicitly defined flow, jump rate function, and transition kernel, allowing a rigorous mathematical analysis of their structural properties.

For both audit and imitation models, we established the preservation of fundamental conserved quantities, including total population mass and aggregate income. Simulations confirmed that the two mechanisms induce contrasting long-term behaviours: audit events drive the population toward full compliance, whereas imitation events shift it toward full non-compliance. These contrasting dynamics motivated the construction of a combined model in which both mechanisms act concurrently. The resulting combined PDMP captures the interaction of opposing social and institutional forces and overcomes a key limitation of the deterministic framework, where evasion behaviour is fixed.

The combined PDMP model retains the invariants of the deterministic system but no longer converges to a deterministic equilibrium state. Instead, 
numerical experiments indicate that, under suitable parameter choices, the combined PDMP approaches a stationary distribution supported in a neighbourhood of the deterministic equilibrium. The model therefore offers a richer and more realistic description of tax-evasion dynamics, reflecting how populations continually adjust compliance behaviours in response to oversight and peer effects.

Overall, the PDMP formulation provides a flexible mathematical framework for integrating deterministic income dynamics with stochastic behavioural mechanisms. It opens a variety of possibilities for further extensions, such as stochastic transition kernels, more refined behavioural sectors, alternative audit policies, or learning mechanisms driven by empirical data. The framework may also serve as a basis for studying equilibrium distributions, stability conditions, and control strategies in socio-economic systems where behaviour evolves through a combination of structural incentives and random events.


\appendix
\renewcommand{\appendixpagename}{Appendix}
\appendixpage


\section{Simulation of the PDMP model}
\label{PDMP_simulation}

In this section, we describe how to simulate paths of the proposed PDMP model. We start by presenting a modified Euler method, which preserves the total population (cf. Property \ref{prop1}), for the simulation of the $n\times m$-dimensional ODE defined by \eqref{population_eq}. Then, we embed this method into a thinning procedure for the simulation of the PDMP.

\subsection{ODE simulation: Conserving Euler method}
\label{Im_Eul}

Direct application of the standard Euler method \cite{euler_citation} to the $n\times m$-dimensional ODE defined by~\eqref{population_eq} produces severe numerical instability. In particular, the Euler method does not preserve the population conservation property (see Property \ref{prop1} in Section \ref{Back_sec}), yielding approximate solutions that may drift away from the set
\begin{equation*}
    \Big\{x \in \mathbb{R}^{n \times m} \colon x_j^\alpha \geq 0, \ \sum_{j=1}^n\sum_{\alpha=1}^m x_j^\alpha = 1\Big\}.
\end{equation*}
Figures~\ref{euler}--\ref{euler2} illustrate this non-conserving behaviour: At a certain point in time (here~\mbox{$t \approx 35$}) the total population
generated by the Euler scheme starts to deviate from~$1$ and the resulting trajectory becomes unusable for further analysis.

To obtain a stable numerical method that preserves population conservation, we modify the standard Euler method by employing a  
simple normalisation step after each Euler update. Let~$\tilde{x}(t)$ denote the standard Euler approximation of $x(t)$ at time $t$. The modified scheme replaces~$\tilde{x}(t)$~by 
\begin{equation}\label{eq:mod-euler}
\hat{x}_j^\alpha(t)
    := \frac{\tilde{x}_j^\alpha(t)}
            {\displaystyle\sum_{i=1}^n\sum_{\beta=1}^m \tilde{x}_i^\beta(t)} , \qquad j=1,\ldots,n, \ \alpha=1,\ldots,m,
\end{equation}
ensuring that
\[
\sum_{j=1}^n\sum_{\alpha=1}^m \hat{x}_j^\alpha(t) = 1.
\]
This additional normalisation step enforces preservation of the total population at each time step, yielding reliable approximations $\hat{x}(t)$ of the true solution $x(t)$ of the ODE (see \mbox{Figures~\ref{euler}--\ref{euler2}}).  
We refer to the modified Euler method using \eqref{eq:mod-euler} as the \textit{conserving Euler method}.

\begin{figure}
\centering
\includegraphics[width=1.0\textwidth]{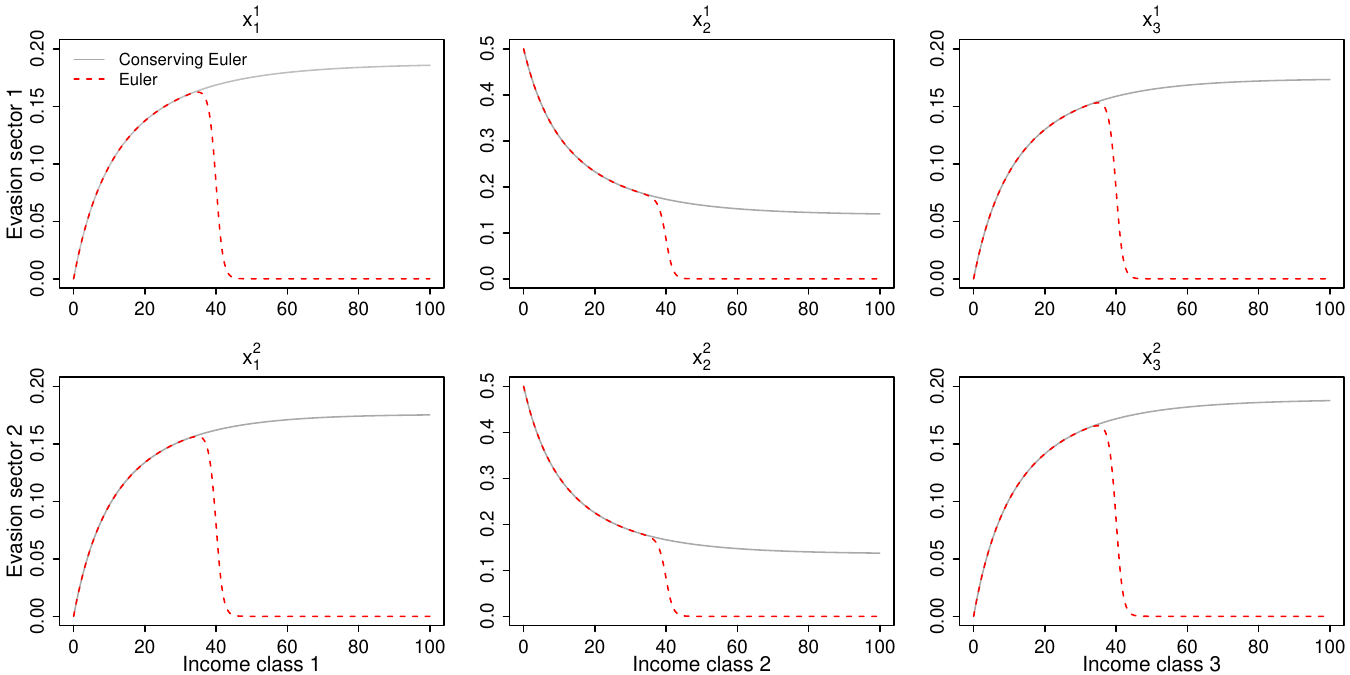}
\caption{Comparison of the standard Euler method (dashed red) and the proposed  conserving Euler method
(solid grey).}
\label{euler}
\end{figure}

\begin{figure}[h!]
\centering
\includegraphics[width=0.6\textwidth]{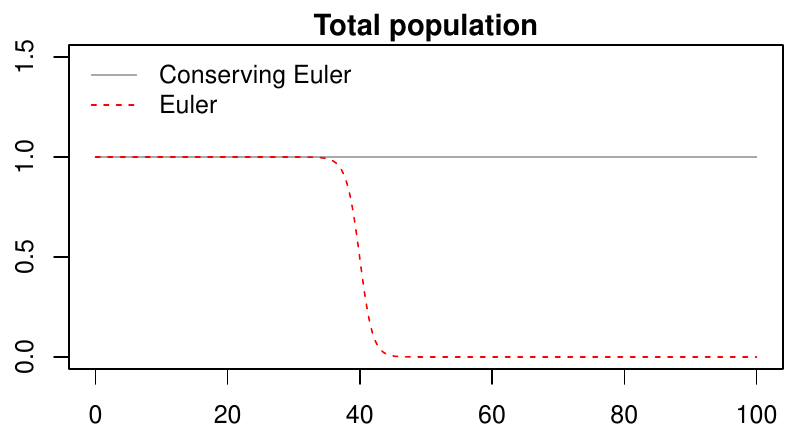}
\caption{Total population corresponding to Figure~\ref{euler} for the standard Euler
method (dashed red) and the proposed conserving Euler method (solid grey).}
\label{euler2}
\end{figure}


\subsection{Thinning method}
\label{Thin_meth}

Let $(X(t))_{t\ge 0}$ be a PDMP with total jump rate 
$\Lambda(x) = \lambda_{\mathrm{audit}}(x) + \lambda_{\mathrm{imitation}}(x)$. Assume that $\lambda_{\mathrm{audit}}(x) \le \gamma$ and 
$\lambda_{\mathrm{imitation}}(x) \le \eta$, so that 
\begin{equation*}
\Lambda(x) \le \bar\Lambda := \gamma + \eta.    
\end{equation*}

\noindent Under this bound, jump times are simulated via thinning of a homogeneous Poisson process with rate $\bar\Lambda$; see \cite{lewis1979simulation}. Let $(\omega_i)_{i\ge1}$ be i.i.d.\ exponential random variables with parameter $\bar\Lambda$, and define the candidate times by
\begin{equation*}
T_i^\ast = T_{i-1} + \omega_i.
\end{equation*}

\noindent At each time $T_i^\ast$, the state is first evolved deterministically up to $T_i^\ast-$, yielding $X({T_i^\ast-})$. The candidate time is then accepted with probability 
$a_i = \Lambda(X({T_i^\ast-}))/\bar\Lambda$. Equivalently, for \mbox{$U_i \sim \mathcal{U}(0,1)$},
\begin{equation*}
T_i =
\begin{cases}
T_i^\ast, & \text{if } U_i \le \dfrac{\Lambda(X({T_i^\ast-}))}{\bar\Lambda},\\
\text{reject}, & \text{otherwise}.
\end{cases}
\end{equation*}

\noindent The accepted times $(T_i)$ form the sequence of jump times of the PDMP and thus generate a point process with stochastic intensity $\Lambda(X(t))$.

\begin{remark}
Note that the above construction is stated for the combined model with total rate $\Lambda(x) = \lambda_{\mathrm{audit}}(x) + \lambda_{\mathrm{imitation}}(x)$. 
In the case where only one mechanism is present, the procedure simplifies accordingly. 
If only audit jumps are considered, then $\Lambda(x) = \lambda_{\mathrm{audit}}(x)$ and the thinning is performed with upper bound $\gamma$. Similarly, for the imitation-only model, one has $\Lambda(x) = \lambda_{\mathrm{imitation}}(x)$ and upper bound $\eta$. In both cases, the acceptance probability reduces to the corresponding single rate divided by its bound.
\end{remark}

\begin{remark}
In the present framework, the rates $\lambda_{\mathrm{audit}}$ and $\lambda_{\mathrm{imitation}}$ are defined so as to be uniformly bounded by the parameters $\gamma$ and $\eta$, respectively. 
In more general PDMP settings, suitable dominating rates may be difficult to obtain and can complicate simulation; see, e.g.,~\mbox{\cite{BERTAZZI202291, lemaire2018exact}}.
\end{remark}

\subsection{PDMP simulation: Thinning and conserving Euler}
\label{Final_PDMP_sim}

In this section, we combine the conserving Euler method introduced in Section~\ref{Im_Eul} with the thinning procedure of Section~\ref{Thin_meth} to simulate sample paths of the PDMP up to a fixed time horizon $T_{\mathrm{final}}$. The proposed method is summarised in Algorithm \ref{alg:PDMP_sim}.

\begin{algorithm}
\caption{Simulation of the PDMP}
\label{alg:PDMP_sim}
\begin{algorithmic}[1]
\State \textbf{Initialise:} $T_0 = 0$, $X({T_0}) = x_0$, choose $T_{\mathrm{final}} > 0$, set $i = 1$
\While{$T_{i-1} < T_{\mathrm{final}}$}
    \State Generate $\omega_i \sim \mathrm{Exp}(\bar\Lambda)$ and set
    \begin{equation*}
        T_i^\ast = T_{i-1} + \omega_i
    \end{equation*}

    \State Evolve the deterministic dynamics on $[T_{k-1}, T_i^\ast)$ using the conserving Euler method, yielding $X({T_i^\ast-})$

    \State Generate $U_i \sim \mathcal{U}(0,1)$

    \If{$U_i \le \dfrac{\Lambda(X({T_i^\ast-}))}{\bar\Lambda}$}
        \State $T_i = T_i^\ast$
        \State Determine the jump type according to
        \[
        \mathbb{P}(\text{audit}) = \frac{\lambda_{\mathrm{audit}}(X({T_i-}))}{\Lambda(X({T_i-}))}, 
        \qquad
        \mathbb{P}(\text{imitation}) = \frac{\lambda_{\mathrm{imitation}}(X({T_i-}))}{\Lambda(X({T_i-}))}
        \]
        \State Sample the post-jump state $X({T_i})$ accordingly
        \State $i \gets i+1$
    \Else
        \State $T_{i-1} \gets T_i^\ast$, \quad $X({T_{k-1}}) \gets X({T_i^\ast-})$
    \EndIf
\EndWhile
\end{algorithmic}
\end{algorithm}


\newpage
\printbibliography

\end{document}